\documentclass[journal]{IEEEtran}
\IEEEoverridecommandlockouts

\usepackage{amsmath}
\usepackage[linesnumbered, ruled]{algorithm2e}
\usepackage{amsfonts}
\usepackage{amsthm}
\usepackage{algorithmicx}
\usepackage{array}
\usepackage[caption=false,font=normalsize,labelfont=sf,textfont=sf]{subfig}
\usepackage{textcomp}
\usepackage{stfloats}
\usepackage{url}
\usepackage{verbatim}
\usepackage{graphicx}  
\usepackage{float}  
\usepackage{subfloat}  

\usepackage{cite}
\usepackage{multicol}

\usepackage{bm}
\usepackage{booktabs}
\usepackage{amssymb}
\usepackage{comment}
\usepackage[colorlinks=true, allcolors=blue]{hyperref}
\usepackage{mathrsfs}
\usepackage{tikz}
\usepackage{pifont}
\newtheorem{lemma}{Lemma}
\newtheorem{theorem}{Theorem}
\hyphenation{op-tical net-works semi-conduc-tor IEEE-Xplore}
\columnsep 0.200in

\begin{document}

\setlength{\textfloatsep}{0\baselineskip plus 0.2\baselineskip minus 0.2\baselineskip}


\title{On the Distribution of SINR for Cell-Free Massive MIMO Systems
	{\footnotesize \textsuperscript{}}
}

\author{Baolin Chong, Fengqian Guo, Hancheng Lu,~\IEEEmembership{Senior Member,~IEEE} and Langtian Qin
\IEEEcompsocitemizethanks{\IEEEcompsocthanksitem Baolin Chong, Fengqian Guo, Hancheng Lu and Langtian Qin are with the Department of Electronic Engineering and Information Science, University of Science and Technology of China, Hefei 230027, China. (e-mail: chongbaolin@mail.ustc.edu.cn; fqguo@ustc.edu.cn; hclu@ustc.edu.cn; qlt315@mail@mail.ustc.edu.cn) \protect\\}}


\maketitle

\begin{abstract}

Cell-free (CF) massive multiple-input multiple-output (mMIMO) has been considered as a potential technology for Beyond 5G communication systems. However, the performance of CF mMIMO systems has not been well studied. Most existing analytical work on CF mMIMO systems is based on the expected signal-to-interference-plus-noise ratio (SINR). The statistical characteristics of the SINR, which is critical for emerging applications that focus on extreme events, have not been investigated. To address this issue, in this paper, we attempt to obtain the distribution of SINR in CF mMIMO systems. Considering a downlink CF mMIMO system with pilot contamination, we first give the closed-form expression of the SINR. Based on our analytical work on the two components of the SINR, i.e., desired signal and interference-plus-noise, we then derive the probability density function and cumulative distribution function of the SINR under maximum ratio transmission (MRT) and full-pilot zero-forcing (FZF) precoding, respectively. Subsequently, the closed-form expressions for two more sophisticated performance metrics, i.e., achievable rate and outage probability, can be obtained. Finally, we perform Monte Carlo simulations to validate our analytical work. The results demonstrate the effectiveness of the derived SINR distribution, achievable rate, and outage probability.

\end{abstract}

\begin{IEEEkeywords}
Cell-free (CF) massive multiple-input multiple-output (mMIMO), Signal-to-interference-plus-noise ratio (SINR), maximum ratio transmission (MRT), full-pilot zero-forcing (FZF)
\end{IEEEkeywords}

\section{Introduction}
Multiple-input multiple-output (MIMO) technology has been extensively deployed in current communication systems to support diverse service types \cite{SurveyZheng2015,pei2022key,FengRecent2023}.
However, the arrival of Beyond 5G has introduced a multitude of emerging applications, such as holographic telepresence, augmented reality, virtual reality, and the Internet of everything \cite{CellElhoushy2022}, which have imposed more stringent demands on system performance.
The traditional cell-centric MIMO-based cellular networks have become inadequate to meet the demands of the current advancements.
In traditional cellular networks, where users within a cell are served by a single base station (BS), users at the cell edge often experience considerable inter-cell interference caused by signals from neighboring cells. Hence, inter-cell interference is an important factor limiting the capacity of cellular networks \cite{FundamentalLozano2013}.
Furthermore, users located at different positions within the cellular network face challenges in receiving uniform service due to varying distances from the BS.

To overcome aforementioned issues, cell-free (CF) massive MIMO (mMIMO) has been proposed, which combines the advantages of mMIMO and distributed MIMO \cite{GuoJoint2022}. In a CF mMIMO system, a central processing unit (CPU) is employed to control multiple access points (APs), enabling the concurrent service of multiple users on the same time-frequency resources \cite{ZhangCell2019,CellElhoushy2022}.
Particularly, the distributed deployment of multiple APs eliminates the cell boundaries and reduces the distance between APs and users \cite{ChenChannel2018}, with which inconsistent service quality and large-scale fading effects can be effectively avoided. It has been proved both spectral efficiency (SE) \cite{NayebiPrecoding2017} and energy efficiency (EE) \cite{NgoOn2018} are substantially improved in the CF mMIMO systems.

The significant technical advantages demonstrated by CF mMIMO have spurred an increased research focus on its performance analysis. For CF mMIMO systems under maximum ratio transmission (MRT) precoding, the derivation of expected signal-to-interference-plus-noise ratio (SINR) was initially undertaken by authors of \cite{NgoCell2017}.
Subsequently, in \cite{InterdonatoLocal2020} and \cite{ZhangLocal2021}, the CF mMIMO system's performance is investigated under downlink and uplink data transmission scenarios with different zero-forcing (ZF) precoding and combining schemes.
The performance has also been studied for CF mMIMO systems with other linear receivers, e.g., minimum mean square error (MMSE) receiver \cite{NayebiPerformance2016}.
In \cite{WangUplink2022}, the analytical work was extended to encompass multi-antenna users, coherent Rayleigh fading channels, and achievable rate, building upon previous considerations of single-antenna users and Rayleigh fading channels.
The impact of asynchronous reception of signals by users, resulting from the distributed deployment of APs, was addressed in \cite{ZhengAsynchronous2023}, where a closed-form expression for SE was derived.
Another aspect of performance analysis focused on the practicality of CF mMIMO systems. In \cite{PapazafeiropoulosPerformance2020}, a stochastic geometry approach was utilized to model the locations of APs in CF mMIMO systems, and an expression for achievable rates was derived.
Furthermore, authors in \cite{zdoganPerformance2019} and \cite{FangCell2021} investigated the impact of mobility and oscillator phase noise on system performance, respectively. In \cite{MasoumiPerformance2020}, authors considered limited feedback capacity as well as hardware impairments in both APs and users, which leads to the derivation of the corresponding expected SINR.
Moreover, the performance of CF mMIMO with other technologies, including low-resolution analog-to-digital converter (ADC) \cite{ZhangOn2019}, reconfigurable intelligent surfaces (RIS) \cite{AlRIS2021}, simultaneous wireless information and power transfer (SWIPT) \cite{KusaladharmaStochastic2020}, non-orthogonal multiple access (NOMA) \cite{LiNOMA2018}, and rate-splitting multiple-access (RSMA) \cite{MishraRate2022}, has also been extensively studied.

Existing analytical work provides a comprehensive and detailed characterization of the average performance of CF mMIMO systems. However, capturing the statistical properties of the SINR, which is a critical task in CF mMIMO systems applied for emerging applications that focus on extreme events, has not been addressed.
The exploration of SINR distribution has been a prominent research topic in MIMO systems.
The SINR distribution for MMSE MIMO systems was derived by authors of \cite{PingOn2006}, and subsequent advancements by \cite{LimOn2019} resulted in the derivation of the exact SINR distribution.
Based on the analyzed SINR distribution, numerous research has delved into the performance evaluation of MIMO systems, encompassing metrics such as SE and EE \cite{NgoEnergy2013}, uncoded error and outage probabilities, diversity-multiplexing gain tradeoff, and coding gain \cite{JiangPerformance2011}.
As MIMO systems evolve into CF mMIMO systems, the foundational performance analysis concerning SINR distribution continues to be deficient.
Besides, based on the existing research on the expected SINR, only the lower performance bounds of the CF mMIMO system can be obtained, such as the lower bound of achievable rates. However, this fails to exactly reflect the actual system performance. To more precisely characterize the system's performance, it becomes imperative to acquire the SINR distribution.
Furthermore, for emerging applications focused on ultra-reliable low-latency communication, such as intelligent driving, remote healthcare, and tactile internet, analyzing the system's performance under extreme conditions becomes even more crucial, clearly going beyond what can be achieved by relying solely on the expected SINR. Analyzing the statistical performance of the CF mMIMO system becomes paramount in such cases.
Analyzing the distribution of SINR is becoming increasingly crucial.
While there are already existing studies on the distribution of signal-to-noise ratio (SNR) in CF mMIMO systems \cite{ZhangPerformance2023,ZhangStatistical2021}, the impact of interference remains challenging to disregard in many cases.
Therefore, conducting research on the statistical properties of SINR in CF mMIMO systems is necessary to gain a deeper understanding of their characteristics.

In this paper, we investigate the statistical characteristics of the SINR in CF mMIMO systems. It should be noted that such analytical work is non-trivial. The dense deployment of APs introduces complexities due to the increased number of signals received by all users. These signals include both target signals and interference signals originating from multiple APs.
Each received signal represents the accumulation of numerous individual signals, resulting in intricate interactions among different signals. Furthermore, the reuse of pilot sequences among different users in the system leads to channel estimation correlation between them. The correlation introduces coherence in the precoding vectors and causes the signals transmitted to users employing the same pilot to exhibit coherence. We overcome the aforementioned challenges and obtain the distribution of the SINR in CF mMIMO systems. The main contributions are summarized as follows:

\begin{itemize}
	
	\item We consider data transmission in a downlink CF mMIMO system with pilot contamination. Based on the received signals at the users, we give the expressions of the SINR when MRT and full-pilot ZF (FZF) precoding schemes are involved.	
	
	\item We conduct an analysis of the SINR distribution under MRT and FZF precoding. Specifically, we begin by decomposing the SINR into two components: the desired signal (DS) and the interference plus noise (IN).
For MRT precoding, we leverage the Central Limit Theorem (CLT) and random matrix theory to separately analyze the distributions of DS and IN. This allows us to derive the PDF and CDF of SINR, taking into account the independence between DS and IN.
In the case of FZF precoding, we directly compute the distribution of DS and analyze the distribution of IN accordingly. Subsequently, we derive the overall distribution of SINR based on these individual components.

	\item In order to compare with the lower bound of the achievable rate, we derive closed-form expressions for the achievable rate and the outage probability based on our analytical work on the SINR distribution under the MRT and FZF precoding, respectively. To validate our derived results, we conduct Monte Carlo simulations. The simulation results demonstrate the effectiveness of our analytical work on the distribution of SINR, achievable rate, and outage probability.
\end{itemize}

The rest of the paper is organized as follows. Section II gives the downlink CF mMIMO system with pilot contamination and the expression of SINR. The distribution of SINR under MRT and FZF precoding for the CF mMIMO system is obtained in Section III, respectively.
Section IV gives closed-form expressions for the achievable rate and the outage probability under the MRT and FZF precoding, respectively.
Section V presents the simulation results and analysis. Finally, the conclusion is drawn in Section VI.

\emph{Notations:} In this paper, vectors, and matrices are denoted by lowercase and uppercase bold letters, respectively.
For a general matrix $\mathbf{A}$, $\mathbf{A}^H$ and $\mathbf{A}^{-1}$represent the Hermitian and inverse of $\mathbf{A}$, respectively.
$\left | \cdot \right | $ and $\left ( \cdot \right )^* $ denote the modulus and the conjugate of a complex number, respectively.
$\left \| \mathbf{x}  \right \|$ denotes $\ell_2$ norm of vector $\mathbf{x}$ and $\mathbb{C}^{x\times y}$ represents the space of $x\times y$ complex number matrices.
$\mathbb{E}\left \{ x \right \}$ denote the expected value of $x$ and the calligraphy upper-case letter, such as $\mathcal{K}$, denotes a set.

\section{System Model}\label{Section2}

\begin{table}
	\centering
	\caption{List of Key Notations}
	\begin{tabular} {m{32pt}	m{180pt}	m{0cm}}
		\toprule
		Symbol		& Description							 	\\
		\midrule

		$\mathcal{M}$, $\mathcal{K}$	     & Index set of APs and users								\\
        $N$				                     & Number of antennas each AP        \\		
        $\mathbf{h}_{mk}$ 	                 & Actual channel between AP $m$ and user $k$         \\
        $\hat{\mathbf{h}}_{mk}$              & Estimated channel between AP $m$ and user $k$ \\
        $\bar{\mathbf{h}}_{mk}$              & Channel estimated error between AP $m$ and user $k$ \\
        $\bar{\mathbf{H}}_{m}$               & Full-rank matrix of channel estimates at AP $m$\\
        $\beta_{mk}$                         & Large scale fading between AP $m$ and user $k$ \\
        $c_{mk}$                             & Estimated large scale fading between AP $m$ and user $k$ \\
        $l_p$                                & Length of pilot sequence\\
        $\mathcal{P}_k$                      & Index set of users which use the same pilot sequence with user $k$\\
        $\rho_p$, $\rho_d$                   & Normalized pilot power and normalized downlink transmission power each AP\\
        $\mathbf{b}_{mk}$                    & Precoding matrix at AP $m$ for user $k$\\
        $\text{DS}_k$, $\text{IN}_k$         & $\text{DS}$ and $\text{IN}$ for user $k$\\
		\bottomrule
	\end{tabular}
	\label{t1}
\end{table}

We consider a downlink CF mMIMO system operating in time-division duplex (TDD), where a total of $M$ APs, each equipped with $N$ antennas, have the capability to jointly and coherently serve $K$ users, each equipped with a single antenna.
The set of APs and users are denoted by $\mathcal{M}$ and $\mathcal{K}$, respectively. All APs are connected to a CPU through backhauls. The channel vector $\mathbf{h}_{mk}$ between AP $m$ and user $k$ is modeled as
\begin{equation}\label{channel model}
  \mathbf{h}_{mk} = \beta_{mk}^{\frac{1}{2} }\mathbf{g}_{mk},
\end{equation}
where $\beta_{mk}$ represents the large-scale fading coefficient which is influenced by path loss and shadowing fading, and $\mathbf{g}_{mk} \sim \mathcal{CN}(0,\mathbf{I}_N)$ denotes the small-scale fading.
Table \ref{t1} lists a compilation of the main notations used in this paper.

\subsection{Uplink Training and Channel Estimation}
During the uplink training phase, each user simultaneously transmits a pilot sequence to all APs, with the length of the pilot sequence denoted as $l_p$.
Define $i_k\in\left \{ 1,\dots,l_p \right \}$ as the index of the pilot used by user $k$, represented by $\pmb{\phi}_{i_k} \in \mathbb{C}^{l_p \times 1}$.
We assume that $l_p \le K$, which indicates that some users will share the same pilot sequence. Consequently, we define $\mathcal{P}_k \subset \mathcal{K}$ as the set of indices, including $k$, corresponding to users assigned the same pilot as user $k$. Therefore, for any user $k_1$ where $k_1 \ne k$, the condition $i_k =i_{k_1} \Leftrightarrow k_1 \in \mathcal{P}_k$ holds true.
The pilot sequences are mutually orthogonal and the inner product of pilot sequences is given by
\begin{equation}\label{pilot sequence expression}
  \pmb{\phi}_{i_{k_1}}^{H}\pmb{\phi}_{i_k} = \left \{
  \begin{aligned}
   &0,   && k_1 \notin \mathcal{P}_k,\\
   &l_p, && k_1 \in \mathcal{P}_k.
  \end{aligned}
    \right .
\end{equation}
The pilot signal received by AP $m$, denoted as $\mathbf{Y}_m^p$, can be expressed as
\begin{equation}\label{pilot signal received by AP}
  \mathbf{Y}_m^p = \sqrt{\rho_p}\sum_{k\in \mathcal{K}}\mathbf{h}_{mk}\pmb{\phi}_{i_k}^H + \mathbf{Z}_m^p,
\end{equation}
where $\rho_p$ denotes the normalized pilot power, and $\mathbf{Z}_m^p \in \mathbb{C}^{N\times l_p}$ corresponds to the additive Gaussian noise matrix at AP $m$ during the uplink training phase, each element of which is independent and follows the distribution of $\mathcal{CN}(0,1)$.

To estimate the channel to user $k$, AP $m$ performs a correlation operation between the received pilot signal and the corresponding pilot sequence $\pmb{\phi}_{i_k}$. Subsequently, it applies the MMSE technique to obtain the MMSE channel estimate $\hat{\mathbf{h}}_{mk}$ \cite{InterdonatoLocal2020}, which can be calculated as
\begin{equation}\label{MMSE estimated channel}
  \hat{\mathbf{h}}_{mk} = \kappa_{mk}\mathbf{Y}_m^p\pmb{\phi}_{i_k},
\end{equation}
where $\kappa_{mk}$ is defined as
\begin{equation}\label{define of kappa}
  \kappa_{mk} = \frac{\sqrt{\rho_p}\beta_{mk}}{l_p\rho_p{\textstyle \sum_{k_1 \in \mathcal{P}_k}\beta_{mk_1}}+1 }.
\end{equation}
The channel estimation follows the distribution of $\mathcal{CN}(0, c_{mk}\mathbf{I}_N)$ with $c_{mk}$ given by \cite{InterdonatoLocal2020}
\begin{equation}\label{estimated coefficient}
  c_{mk} = \frac{l_p\rho_p\beta_{mk}^2}{l_p\rho_p{\textstyle \sum_{k_1 \in \mathcal{P}_k}\beta_{mk_1}}+1 }.
\end{equation}
Then, denote $\bar{\mathbf{h}}_{mk} = \mathbf{h}_{mk} - \hat{\mathbf{h}}_{mk}$ as the channel estimation error, which is independent of $\hat{\mathbf{h}}_{mk}$ and follows the distribution of $\mathcal{CN}(0, (\beta_{mk}-c_{mk})\mathbf{I}_N)$.
When multiple users share the same pilot sequence, their channel estimates are parallel to each other. This relationship can be expressed as
\begin{equation}\label{parallel channel}
  \hat{\mathbf{h}}_{mk} = \frac{\beta_{mk}}{\beta_{mk_1}} \hat{\mathbf{h}}_{mk_1},\ k_1 \in \mathcal{P}_k.
\end{equation}

\subsection{Downlink Data Transmission}
During downlink data training, all APs transmit signals to all users. The signal transmitted by AP $m$ is given by
\begin{equation}\label{downlink signal from AP}
  \mathbf{y}_m^d = \sqrt{\rho_d}\sum_{k\in \mathcal{K}}\sqrt{\eta_{mk}}\mathbf{b}_{mk}q_k,
\end{equation}
where $\rho_d$ represents the normalized downlink transmission power each AP and $\eta_{mk}$ denotes power control coefficient from AP $m$ to user $k$.
The symbol $q_k$, which satisfies $\mathbb{E}\left \{ \left | q_k \right |^2  \right \} = 1$, is the symbol intended for the user $k$ during downlink data transmission and $\mathbf{b}_{mk}$ is precoding vector.

The precoding vector $\mathbf{b}_{mk}$ for user $k$ at AP $m$ is determined based on the channel estimation.
The matrix of the channel estimates, $\hat{\mathbf{H}}_{m} = \left [ \hat{\mathbf{h}}_{m1},\hat{\mathbf{h}}_{m2},\cdots,\hat{\mathbf{h}}_{mK} \right ] \in \mathbb{C}^{N\times K} $, is rank-deficient due to the pilot sequence length being smaller than the number of users.
To obtain a full-rank matrix of channel estimates, denoted as $\bar{\mathbf{H}}_{m} \in \mathbb{C}^{N\times l_p}$, we define $\bar{\mathbf{H}}_{m} = \mathbf{Y}_m^p \pmb{\Phi}$, where $\pmb{\Phi} = \left [ \pmb{\phi}_1,\dots,\pmb{\phi}_{l_p} \right ] \in \mathbb{C}^{l_p\times l_p} $ is the pilot-book matrix \cite{InterdonatoLocal2020}. The channel estimate between AP $m$ and user $k$ can be expressed in terms of $\bar{\mathbf{H}}_{m}$ as
\begin{equation}\label{channel estimated expression 2}
  \hat{\mathbf{h}}_{mk} = \kappa_{mk}\bar{\mathbf{H}}_{m}\mathbf{\mathbf{e}_{i_k}},
\end{equation}
where $\mathbf{e}_k$ represent the $i_k$th column of unit matrix $\mathbf{I}_{l_p}$. We consider using MRT and FZF precoding during downlink data transmission \cite{NgoCell2017,InterdonatoLocal2020}, the precoding vector $\mathbf{b}_{mk}$ for user $k$ can be expressed as follows:
\begin{equation}\label{proceding vector}
  \mathbf{b}_{mk} = \left \{
  \begin{aligned}
    &\frac{\bar{\mathbf{H}}_{m}\mathbf{e}_{i_k}}{\sqrt{\mathbb{E} \left \{ \left \| \bar{\mathbf{H}}_{m}\mathbf{e}_{i_k} \right \|^2 \right \}}   }, && \text{MRT},\\
    &\frac{\bar{\mathbf{H}}_{m}\left ( \bar{\mathbf{H}}_{m}^H\bar{\mathbf{H}}_{m} \right )^{-1} \mathbf{e}_{i_k}}{\sqrt{\mathbb{E} \left \{ \left \| \bar{\mathbf{H}}_{m}\left [ \bar{\mathbf{H}}_{m}^H\bar{\mathbf{H}}_{m} \right ]^{-1} \mathbf{e}_{i_k}\right \|^2 \right \}}   }, && \text{FZF}.\\
  \end{aligned}
   \right .
\end{equation}
Each user receives the signals from all APs and the received signal at user $k$ is
\begin{equation}\label{downlink signal to UE}
  r_k^d = \sqrt{\rho_d}\sum_{m\in\mathcal{M} } \sum_{k_1\in \mathcal{K}} \sqrt{\eta_{mk_1}} \mathbf{h}_{mk}^H\mathbf{b}_{mk_1}q_{k_1} + z_k,
\end{equation}
where $z_k$ is the noise at user $k$ with the distribution of $\mathcal{CN}(0,1)$.
Then, the SINR of user $k$ during downlink data transmission is given in (\ref{uplink SINR exprssion}).
\begin{figure*}
    \begin{equation}\label{uplink SINR exprssion}
      \begin{aligned}
         \gamma_k = \frac{\rho_d\left |\sum_{m\in \mathcal{M}}\sqrt{\eta_{mk}} \hat{\mathbf{h}}_{mk}^H\mathbf{b}_{mk}\right |^2 }{\rho_d\sum_{k_1\ne k}\left |\sum_{m\in \mathcal{M}}\sqrt{\eta_{mk_1}}  \hat{\mathbf{h}}_{mk}^H\mathbf{b}_{mk_1}\right |^2 + \rho_d\sum_{k_1\in \mathcal{K}}\left |\sum_{m\in \mathcal{M}}\sqrt{\eta_{mk_1}}  \bar{\mathbf{h}}_{mk}^H\mathbf{b}_{mk_1}\right |^2 + z_k} .
      \end{aligned}
    \end{equation}
	{\noindent} \rule[-10pt]{18cm}{0.05em}
\end{figure*}

\section{Distribution of SINR for CF mMIMO System}
In this section, we derive the distribution of the SINR for the CF mMIMO system. To begin, we partition the received signal from the user into two components: the DS and the IN, utilizing the expression of SINR. Subsequently, we investigate the distribution of DS and IN under the MRT precoding scheme, employing the CLT and random matrix theory. By leveraging the independence properties of DS and IN, we obtain the distribution of SINR.
Furthermore, we proceed to change the precoding vector, transitioning to the FZF scheme. We calculate the value of DS and obtain the distribution of IN. Finally, based on the aforementioned analysis
regarding DS and IN, we derive the distribution of SINR for the CF mMIMO system.

Based on the expression of SINR for user $k$, we divide the signal that user $k$ received into DS and IN. Then the expression of SINR $\gamma_k$ can be represented as
\begin{equation}\label{represent of gamma}
  \gamma_k = \frac{\text{DS}_k}{\text{IN}_k} = \frac{\rho_dU_{k}^1}{\rho_d\sum_{k_1\ne k}U_{kk_1}^2+\rho_d\sum_{k_1 \in \mathcal{K} }U_{kk_1}^3+ z_k},
\end{equation}
where $\text{DS}_k$ and $\text{IN}_k$ represent the DS and IN for user $k$. The expression of $\text{DS}_k$ and $\text{IN}_k$ is given as follows:
\begin{equation}\label{expression of TS and IN}
  \begin{aligned}
    \text{DS}_k &= \rho_dU_{k}^1,\\
    \text{IN}_k &= \rho_d\sum_{k_1\ne k}U_{kk_1}^2+\rho_d\sum_{k_1 \in \mathcal{K} }U_{kk_1}^3+z_k,
  \end{aligned}
\end{equation}
The expression of $U_{k}^1$, $U_{kk_1}^2$ and $U_{kk_1}^3$ are given by
\begin{subequations}\label{expression of Ukk}
  \begin{align}
    &U_{k}^1 = \left |\sum_{m\in \mathcal{M}}\sqrt{\eta_{mk}} \hat{\mathbf{h}}_{mk}^H\mathbf{b}_{mk}\right |^2, \\
    &U_{kk_1}^2 = \left |\sum_{m\in \mathcal{M}}\sqrt{\eta_{mk_1}}  \hat{\mathbf{h}}_{mk}^H\mathbf{b}_{mk_1}\right |^2 , \\
    &U_{kk_1}^3 = \left |\sum_{m\in \mathcal{M}}\sqrt{\eta_{mk_1}}  \bar{\mathbf{h}}_{mk}^H\mathbf{b}_{mk_1}\right |^2.
  \end{align}
\end{subequations}
For the convenience of the following analysis, we redefine $U_{k}^1 = \left |\xi_{mk}\right |^2$, $U_{k}^2 = \left |\xi_{mkk_1}\right |^2$ and $U_{k}^3 = \left |\psi_{mkk_1}\right |^2$, where $\xi_{mk}$, $\xi_{mkk_1}$ and $\psi_{mkk_1}$ is given by $\xi_{mk} =\sqrt{\eta_{mk}} \hat{\mathbf{h}}_{mk}^H\mathbf{b}_{mk}$, $\xi_{mkk_1} = \sqrt{\eta_{mk_1}}  \hat{\mathbf{h}}_{mk}^H\mathbf{b}_{mk_1}$ and $\psi_{mkk_1} = \sqrt{\eta_{mk_1}}  \bar{\mathbf{h}}_{mk}^H\mathbf{b}_{mk_1}$, respectively.

\subsection{Maximum Ratio Transmission}
In this subsection, we derive the distribution of SINR when MRT precoding is used.
In the presence of multiple users receiving signals from all APs simultaneously, it becomes evident from equation (\ref{uplink SINR exprssion}) that all signals are interconnected, giving rise to the formation of the DS and the IN. Consequently, directly analyzing the distribution of SINR under MRT precoding is a challenging task. To overcome this difficulty, we adopt a feasible approach in the subsequent analysis. Firstly, we scrutinize the expressions for the distribution of DS and IN individually. Subsequently, by utilizing the derived expressions for the distributions of DS and IN, we are able to obtain the distribution of SINR.

When the number of APs $M$ and number of antennas at each AP $N$ are large enough, the distribution of $\sum_{m\in \mathcal{M}}\sqrt{\eta_{k}} \mathbf{b}_{mk}^H\hat{\mathbf{h}}_{mk}$ is approximated as a Gaussian distribution. Then we can approximate $U_k^1$ as a Gamma distribution in the following lemma.

\begin{lemma}\label{lemma of U1 downlink MRT}
  When MRT precoding is used for downlink data transmission, the distribution of $U_{k}^1$ can be approximated as a Gamma distribution with the shape parameter $j_{k1}$ and the scale parameter $\chi_{k1}$, i.e., $U_{k}^1 \sim Gamma\left ( j_{k1},\chi_{k1} \right )$. Therefore, the PDF and cumulative distribution function (CDF) of $U_{k}^1$ is expressed as follows:
  \begin{equation}\label{PDF of gamma U}
    \left \{
    \begin{aligned}
      &f_{U_{k}^1}(x) = \frac{1}{\Gamma(j_{k1})\chi_{k1}^{j_{k1}}}x^{j_{k1}-1}e^{-\frac{x}{\chi_{k1}} },\\
      &F_{U_{k}^1}(x) = \frac{1}{\Gamma(j_{k1})}\bar{\gamma}\left (j_{k1} ,\frac{x}{\chi_{k1}} \right ),
    \end{aligned}
     \right.
  \end{equation}
  where $\Gamma(z) = \int_{0}^{+\infty} t^{z-1} e^{-t} dt$ and $\bar{\gamma}(a,z) = \int_{0}^{z} t^{a-1} e^{-t} dt$ represent the gamma function and incomplete gamma function, respectively. The shape parameter $j_{k1}$ and scale parameter $\chi_{k1}$ are given by
  \begin{equation}\label{gamma paraments U1 downlink MRT}
      j_{k1} = \frac{u_{U_{k}^1}^2}{u_{U_{k}^1}^{(2)} - u_{U_{k}^1}^2},\ \chi_{k1} = \frac{u_{U_{k}^1}^{(2)} - u_{U_{k}^1}^2}{u_{U_{k}^1}},
  \end{equation}
  where $u_{U_{k}^1} = \mathbb{E}\left \{ U_{k}^1 \right \} $ and $u_{U_{k}^1}^{(2)} = \mathbb{E}\left \{  \left ( U_{k}^1  \right ) ^2 \right \}$ denote the first and second moments of $U_{k}^1$ respectively. The expressions for $u_{U_{k}^1}$ and $u_{U_{k}^1}^{(2)}$ are given in (\ref{first order of U1 downlink MRT}) and (\ref{second order of U1 downlink MRT}), respectively.
\end{lemma}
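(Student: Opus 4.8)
The plan is to reduce $U_k^1$ to the square of a real scalar sum over the APs, invoke the CLT to render that sum Gaussian, and then fit a Gamma law by matching its first two moments. First I would simplify the MRT precoder in (\ref{proceding vector}): substituting $\hat{\mathbf{h}}_{mk} = \kappa_{mk}\bar{\mathbf{H}}_m\mathbf{e}_{i_k}$ from (\ref{channel estimated expression 2}) and $\mathbb{E}\left\{\left\|\bar{\mathbf{H}}_m\mathbf{e}_{i_k}\right\|^2\right\} = Nc_{mk}/\kappa_{mk}^2$ collapses the precoder to the normalized channel estimate $\mathbf{b}_{mk} = \hat{\mathbf{h}}_{mk}/\sqrt{Nc_{mk}}$. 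Then $\xi_{mk} = \sqrt{\eta_{mk}}\,\hat{\mathbf{h}}_{mk}^H\mathbf{b}_{mk} = \sqrt{\eta_{mk}}\,\left\|\hat{\mathbf{h}}_{mk}\right\|^2/\sqrt{Nc_{mk}}$ is real and nonnegative, so $U_k^1 = \left(\sum_{m\in\mathcal{M}}\xi_{mk}\right)^2$. Since $\hat{\mathbf{h}}_{mk}\sim\mathcal{CN}(0,c_{mk}\mathbf{I}_N)$, each $\left\|\hat{\mathbf{h}}_{mk}\right\|^2$ follows $Gamma(N,c_{mk})$, which gives $\mathbb{E}\left\{\xi_{mk}\right\} = \sqrt{\eta_{mk}Nc_{mk}}$ and $\mathrm{Var}\left\{\xi_{mk}\right\} = \eta_{mk}c_{mk}$.

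Next I would apply the CLT. The summands $\xi_{mk}$ are mutually independent across $m$ because the channels to distinct APs are independent, so for large $M$ (with large $N$ also sharpening each term through concentration of $\left\|\hat{\mathbf{h}}_{mk}\right\|^2$) the sum $S_k = \sum_{m\in\mathcal{M}}\xi_{mk}$ is approximately Gaussian with mean $\mu_k = \sum_{m\in\mathcal{M}}\sqrt{\eta_{mk}Nc_{mk}}$ and variance $\sigma_k^2 = \sum_{m\in\mathcal{M}}\eta_{mk}c_{mk}$. Consequently $U_k^1 = S_k^2$ is a nonnegative variable (a scaled noncentral chi-square). Because its exact law is unwieldy for the later step of dividing DS by IN, I would approximate it within the two-parameter Gamma family, the natural positive-support distribution pinned down by its first two moments.

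The parameters then follow by elementary algebra. A $Gamma(j_{k1},\chi_{k1})$ variable has mean $j_{k1}\chi_{k1}$ and variance $j_{k1}\chi_{k1}^2$; equating these to $u_{U_k^1} = \mathbb{E}\left\{U_k^1\right\}$ and to the variance $u_{U_k^1}^{(2)} - u_{U_k^1}^2$ and solving yields exactly $j_{k1} = u_{U_k^1}^2/\left(u_{U_k^1}^{(2)} - u_{U_k^1}^2\right)$ and $\chi_{k1} = \left(u_{U_k^1}^{(2)} - u_{U_k^1}^2\right)/u_{U_k^1}$, as in (\ref{gamma paraments U1 downlink MRT}). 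The PDF and CDF in (\ref{PDF of gamma U}) are then just the standard Gamma density and its incomplete-Gamma integral.

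The substantive work, and the main obstacle, is evaluating the two moments that feed this matching, namely the closed forms referenced in the statement. The first is immediate: $u_{U_k^1} = \mathbb{E}\left\{S_k^2\right\} = \mu_k^2 + \sigma_k^2$. The second, $u_{U_k^1}^{(2)} = \mathbb{E}\left\{S_k^4\right\}$, is the delicate part: I would either expand it under the Gaussian surrogate as $\mu_k^4 + 6\mu_k^2\sigma_k^2 + 3\sigma_k^4$, or compute it exactly from the quartic sum $\sum_{m_1,m_2,m_3,m_4}\mathbb{E}\left\{\xi_{m_1 k}\xi_{m_2 k}\xi_{m_3 k}\xi_{m_4 k}\right\}$, using independence across APs to factor the expectation and the Gamma moments of $\left\|\hat{\mathbf{h}}_{mk}\right\|^2$ up to fourth order for the repeated indices. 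Keeping this cross-AP bookkeeping consistent, and verifying that the Gaussian surrogate and the exact fourth-moment expansion agree to the retained order, is where the care is required.
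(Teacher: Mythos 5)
Your proposal matches the paper's own proof in all essentials: the same collapse of the MRT precoder to $\mathbf{b}_{mk}=\hat{\mathbf{h}}_{mk}/\sqrt{Nc_{mk}}$, the same identification of $\xi_{mk}$ as a real Gamma-distributed variable (shape $N$) independent across APs, the same CLT justification, and the same two-moment matching that yields $j_{k1}$ and $\chi_{k1}$. The only point to note is that the paper commits to your second option for $u_{U_k^1}^{(2)}$ --- the exact multinomial expansion of $\mathbb{E}\{(\sum_m \xi_{mk})^4\}$ using the moments of $\xi_{mk}$ up to fourth order --- rather than the Gaussian-surrogate formula $\mu_k^4+6\mu_k^2\sigma_k^2+3\sigma_k^4$, which agrees with it only to leading order in $M$.
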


\begin{proof}
  Please refer to Appendix \ref{appendix A}.
\end{proof}

Since $\text{DS}_k = \rho_d U_k^1$, we have $\mathbb{E} \left \{  \text{DS}_k \right \} = \rho_d \mathbb{E} \left \{  U_k^1  \right \} $ and $\mathbb{E} \left \{  \text{DS}_k^2 \right \} = \rho_d \mathbb{E} \left \{  \left ( U_k^1  \right )^2  \right \}$. Then, $\text{DS}_k^{\text{MRT}}$ can be approximated as a Gamma distribution with shape parameter $j_{k1}$ and scale parameter $\rho_d\chi_{k1}$, i.e., $\text{DS}_k^{\text{MRT}} \sim \text{Gamma}(j_{k1},\rho_d\chi_{k1} )$, where $\text{DS}_k^{\text{MRT}}$ represents the $\text{DS}_k$ with MRT precoding.

Then we turn to analyze the distribution of $\text{IN}_k^{\text{MRT}}$, where $\text{IN}_k^{\text{MRT}}$ represents the $\text{IN}_k$ with MRT precoding. Similarly, the distribution of $\text{IN}_k^{\text{MRT}}$ can be approximated as a Gamma distribution. In the following lemma, we approximate $\text{IN}_k^{\text{MRT}}$ as a Gamma distribution.

\begin{lemma}\label{lemma of IN downlink MRT}
  When MRT precoding is used for downlink data transmission, the distribution of $\text{IN}_k^{\text{MRT}} = \rho_d\sum_{k_1\ne k}U_{kk_1}^2+\rho_d\sum_{k_1 \in \mathcal{K} }U_{kk_1}^3+z_k$ can be approximated as a Gamma distribution with the shape parameter $j_{k2}$ and the scale parameter $\chi_{k2}$, i.e., $\text{IN}_k^{\text{MRT}} \sim \text{Gamma}(j_{k2},\chi_{k2})$. The shape parameter $j_{k2}$ and scale parameter $\chi_{k2}$ are given by
  \begin{equation}\label{gamma paraments IN downlink MRT}
      j_{k2} = \frac{u_{\text{IN}_{k}^{\text{MRT}}}^2}{u_{\text{IN}_{k}^{\text{MRT}}}^{(2)} - u_{\text{IN}_{k}^{\text{MRT}}}^2} ,\ \chi_{k2} = \frac{u_{\text{IN}_{k}^{\text{MRT}}}^{(2)} - u_{\text{IN}_{k}^{\text{MRT}}}^2}{u_{\text{IN}_{k}^{\text{MRT}}}},
  \end{equation}
  where $u_{\text{IN}_{k}^{\text{MRT}}} = \mathbb{E}\left \{\text{IN}_{k}^{\text{MRT}}\right \} $ and $u_{\text{IN}_{k}^{\text{MRT}}}^{(2)} = \mathbb{E}\left \{ \left (\text{IN}_{k}^{\text{MRT}}\right )^2 \right \}$ are first and second moments of $\text{IN}_k^{\text{MRT}}$, respectively. The expressions for $u_{\text{IN}_{k}^{\text{MRT}}}$ and $u_{\text{IN}_{k}^{\text{MRT}}}^{(2)}$ are given in (\ref{fisrst order of IN downlink MRT}) and (\ref{second order of IN downlink MRT}), respectively.
\end{lemma}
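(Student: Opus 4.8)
The plan is to follow the same moment-matching strategy used for Lemma~\ref{lemma of U1 downlink MRT}, since the claim is again that a positive random variable is well-approximated by a Gamma distribution whose two parameters $(j_{k2},\chi_{k2})$ are fixed by matching its first two moments. The Gamma-parameter formulas in (\ref{gamma paraments IN downlink MRT}) are exactly the standard inversion of the relations $\mathbb{E}\{X\}=j\chi$ and $\mathrm{Var}\{X\}=j\chi^2$, so the entire content of the lemma reduces to computing $u_{\text{IN}_{k}^{\text{MRT}}}=\mathbb{E}\{\text{IN}_k^{\text{MRT}}\}$ and $u_{\text{IN}_{k}^{\text{MRT}}}^{(2)}=\mathbb{E}\{(\text{IN}_k^{\text{MRT}})^2\}$ in closed form; once these are available, substitution gives (\ref{gamma paraments IN downlink MRT}) and the PDF/CDF have the same Gamma form as in (\ref{PDF of gamma U}).

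First I would justify the Gamma approximation itself. Writing $\text{IN}_k^{\text{MRT}}=\rho_d\sum_{k_1\ne k}U_{kk_1}^2+\rho_d\sum_{k_1\in\mathcal{K}}U_{kk_1}^3+z_k$, each $U_{kk_1}^2=|\xi_{mkk_1}|^2$ and $U_{kk_1}^3=|\psi_{mkk_1}|^2$ is a squared magnitude of a sum over the $M$ APs of approximately Gaussian terms (by the CLT, exactly as invoked before Lemma~\ref{lemma of U1 downlink MRT} for $U_k^1$). Hence each individual $U_{kk_1}^2$ and $U_{kk_1}^3$ is itself approximately Gamma-distributed, and $\text{IN}_k^{\text{MRT}}$ is a nonnegative, positively-skewed sum of many such terms; matching its mean and variance to a single Gamma is the natural and self-consistent approximation. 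I would state this CLT-based reasoning briefly and then concentrate on the moments.

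The core computation is the pair of moments. For the first moment I would use linearity to get
\begin{equation}\label{plan first moment}
  u_{\text{IN}_{k}^{\text{MRT}}} = \rho_d\sum_{k_1\ne k}\mathbb{E}\{U_{kk_1}^2\} + \rho_d\sum_{k_1\in\mathcal{K}}\mathbb{E}\{U_{kk_1}^3\} + 1,
\end{equation}
where the noise contributes $\mathbb{E}\{|z_k|^2\}=1$. Each expectation $\mathbb{E}\{U_{kk_1}^2\}$ and $\mathbb{E}\{U_{kk_1}^3\}$ is evaluated from the definitions $\xi_{mkk_1}=\sqrt{\eta_{mk_1}}\hat{\mathbf{h}}_{mk}^H\mathbf{b}_{mk_1}$ and $\psi_{mkk_1}=\sqrt{\eta_{mk_1}}\bar{\mathbf{h}}_{mk}^H\mathbf{b}_{mk_1}$, using the MRT precoder in (\ref{proceding vector}), the distributions $\hat{\mathbf{h}}_{mk}\sim\mathcal{CN}(0,c_{mk}\mathbf{I}_N)$ and $\bar{\mathbf{h}}_{mk}\sim\mathcal{CN}(0,(\beta_{mk}-c_{mk})\mathbf{I}_N)$, the independence of $\bar{\mathbf{h}}_{mk}$ from the estimates, and the crucial pilot-contamination parallelism (\ref{parallel channel}) that ties $\hat{\mathbf{h}}_{mk}$ to $\hat{\mathbf{h}}_{mk_1}$ whenever $k_1\in\mathcal{P}_k$. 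The second moment $u_{\text{IN}_{k}^{\text{MRT}}}^{(2)}=\mathbb{E}\{(\text{IN}_k^{\text{MRT}})^2\}$ expands into a double sum of cross terms $\mathbb{E}\{U_{kk_1}^2 U_{kk_2}^2\}$, $\mathbb{E}\{U_{kk_1}^2 U_{kk_2}^3\}$, $\mathbb{E}\{U_{kk_1}^3 U_{kk_2}^3\}$ plus the noise cross terms; the ones that can be factored require independence, and the ones that cannot (same-pilot users, and the diagonal fourth-moment terms) must be handled with Gaussian fourth-moment identities such as $\mathbb{E}\{|X|^4\}=2(\mathbb{E}\{|X|^2\})^2$ for a zero-mean complex Gaussian.

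The main obstacle will be precisely this second-moment bookkeeping, because pilot contamination destroys independence between interference terms from users sharing a pilot. Through (\ref{parallel channel}) the estimate $\hat{\mathbf{h}}_{mk}$ is a deterministic multiple of $\hat{\mathbf{h}}_{mk_1}$ for $k_1\in\mathcal{P}_k$, so the corresponding $U_{kk_1}^2$ terms are correlated rather than independent, and the cross expectations do not factor. I would therefore split every sum into the coherent part ($k_1\in\mathcal{P}_k$) and the incoherent part ($k_1\notin\mathcal{P}_k$), treating the coherent part with the parallelism relation and the fourth-moment identities and the incoherent part by factorization, then collecting everything. The remaining steps — substituting the two moments into (\ref{gamma paraments IN downlink MRT}) and invoking the Gamma PDF/CDF — are routine, so I would defer the full moment expressions to the referenced equations (\ref{fisrst order of IN downlink MRT}) and (\ref{second order of IN downlink MRT}) and relegate the detailed algebra to an appendix, mirroring the proof of Lemma~\ref{lemma of U1 downlink MRT}.
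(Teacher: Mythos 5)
Your plan coincides with the paper's own proof (Appendix B) in all essentials: there, too, the lemma is established by moment bookkeeping plus Gamma moment matching --- the moments of $\xi_{mkk_1}$ and $\psi_{mkk_1}$ are obtained from complex-Gaussian moment identities, the analysis of $U_{kk_1}^2$ is split into the coherent case $k_1\in\mathcal{P}_k$ (handled through the parallelism (\ref{parallel channel}), exactly as you propose) and the incoherent case $k_1\notin\mathcal{P}_k$ (where $\xi_{mkk_1}$ is zero mean), the first moment of $\text{IN}_k^{\text{MRT}}$ is assembled by linearity with the noise contributing $1$, and the second moment by expanding the square. The one point where you genuinely depart from the paper is the off-diagonal cross terms in that expansion. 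You argue that $\mathbb{E}\{U_{kk_1}^2U_{kk_2}^2\}$ cannot be factored when $k_1,k_2\in\mathcal{P}_k$ and should be computed exactly via the parallelism relation; the paper instead factors \emph{every} cross expectation between distinct interference terms in (\ref{second order of IN downlink MRT}) --- same-pilot pairs and the $U^2$-versus-$U^3$ products alike --- into products of first moments, keeping only the diagonal terms $\mathbb{E}\{(U_{kk_1}^2)^2\}$ and $\mathbb{E}\{(U_{kk_1}^3)^2\}$ as true second moments. Strictly speaking, all of these terms are correlated (they share $\hat{\mathbf{h}}_{mk}$, and same-pilot terms are deterministic functions of the same quantities $\|\hat{\mathbf{h}}_{mk}\|^2$), so your treatment is the more accurate one; the paper's factorization is an additional approximation whose relative error vanishes as $M$ grows, which is consistent with the spirit of the Gamma fit itself. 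But this creates a small inconsistency in your plan: you cannot both retain the coherent cross covariances and ``defer to'' (\ref{second order of IN downlink MRT}), because that equation does not contain them. To prove the lemma exactly as stated --- its parameters are defined through (\ref{fisrst order of IN downlink MRT}) and (\ref{second order of IN downlink MRT}) --- you must either adopt the paper's factorization and flag it as a large-$M$ approximation, or accept that your second moment differs from (\ref{second order of IN downlink MRT}) by the same-pilot covariance terms, i.e., that you are proving a slightly sharper variant of the lemma.
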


\begin{proof}
  Please refer to Appendix \ref{appendix B}.
\end{proof}
Based on the above analysis, we have obtained the distributions of the two components of the SINR: DS and IN. By leveraging the independence property between these two components, we can obtain the distribution of SINR in the CF mMIMO system which is shown in the following theorem.

\begin{theorem}\label{PDF of gamma when MRT downlink}
  In the CF mMIMO system with pilot contamination, when MRT precoding is used for downlink data transmission, the PDF and CDF of the SINR for user $k$, $\forall k \in \mathcal{K}$ can be expressed as follows:
  \begin{equation}\label{pdf of gamma downlink MRT}
    \left \{
    \begin{aligned}
      f_{\gamma_k}^{\text{MRT}}(x) =& \frac{\Gamma(j_{k1}+j_{k2})x^{j_{k1}-1}(\frac{1}{\chi_{k2}}+\frac{x}{\rho_d\chi_{k1}}) ^{-j_{k1}-j_{k2}}}{\Gamma(j_{k1})\Gamma(j_{k2})\left ( \rho_d\chi_{k1} \right ) ^{j_{k1}}\chi_{k2}^{j_{k2}}},\\
      F_{\gamma_k}^{\text{MRT}}(x) =& \frac{\Gamma(j_{k1}+j_{k2}) \chi_{k2}^{j_{k1}} x^{j_{k1}}}{j_{k1}\Gamma(j_{k1})\Gamma(j_{k2})\left ( \rho_d\chi_{k1} \right )^{j_{k1}}}\\
                                    &\times H(j_{k1},j_{k1}+j_{k2},j_{k1}+1,-\frac{\chi_{k2}x}{\rho_d\chi_{k1}}),
    \end{aligned}
     \right .
  \end{equation}
  where $j_{k1}$, $\chi_{k1}$, $j_{k2}$, and $\chi_{k2}$ are given in Lemma \ref{lemma of U1 downlink MRT} and Lemma \ref{lemma of IN downlink MRT}, respectively, $H(a,b,c)$ is the hypergeometric function \cite{HypergeometricFunction}.
\end{theorem}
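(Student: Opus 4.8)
The plan is to treat $\gamma_k^{\text{MRT}} = \text{DS}_k^{\text{MRT}}/\text{IN}_k^{\text{MRT}}$ as a ratio of two independent random variables and to read off its distribution directly from the two Gamma marginals supplied earlier. By the remark following Lemma \ref{lemma of U1 downlink MRT}, $\text{DS}_k^{\text{MRT}} \sim \text{Gamma}(j_{k1},\rho_d\chi_{k1})$, and by Lemma \ref{lemma of IN downlink MRT}, $\text{IN}_k^{\text{MRT}} \sim \text{Gamma}(j_{k2},\chi_{k2})$. The first task is to invoke the independence of these two quantities: $\text{DS}_k$ is built from the projection $\hat{\mathbf{h}}_{mk}^H\mathbf{b}_{mk}$ onto the user's own precoder, whereas $\text{IN}_k$ aggregates the cross terms $\hat{\mathbf{h}}_{mk}^H\mathbf{b}_{mk_1}$ ($k_1\ne k$), the estimation-error terms $\bar{\mathbf{h}}_{mk}^H\mathbf{b}_{mk_1}$, and the noise $z_k$. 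Once independence is granted, the joint density factorizes as a product of the two Gamma PDFs.

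First I would write the density of the ratio via the standard transformation
\begin{equation*}
  f_{\gamma_k}^{\text{MRT}}(x) = \int_0^{\infty} y\, f_{\text{DS}_k^{\text{MRT}}}(xy)\, f_{\text{IN}_k^{\text{MRT}}}(y)\, dy .
\end{equation*}
Substituting the two Gamma PDFs and separating the $x$- and $y$-dependent factors reduces this to a single integral $\int_0^{\infty} y^{\,j_{k1}+j_{k2}-1}e^{-\lambda y}\,dy$ with $\lambda = x/(\rho_d\chi_{k1}) + 1/\chi_{k2}$. This is a Gamma integral equal to $\Gamma(j_{k1}+j_{k2})\,\lambda^{-(j_{k1}+j_{k2})}$, and collecting the prefactors reproduces exactly the claimed PDF.

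For the CDF I would integrate the PDF, $F_{\gamma_k}^{\text{MRT}}(x)=\int_0^x f_{\gamma_k}^{\text{MRT}}(t)\,dt$. After factoring $\chi_{k2}^{\,j_{k1}+j_{k2}}$ out of the bracketed term, the remaining integral has the form $\int_0^x t^{\,j_{k1}-1}(1+\beta t)^{-(j_{k1}+j_{k2})}\,dt$ with $\beta = \chi_{k2}/(\rho_d\chi_{k1})$. I would evaluate it with the standard identity $\int_0^x t^{\mu-1}(1+\beta t)^{-\nu}\,dt = (x^{\mu}/\mu)\,{}_2F_1(\nu,\mu;\mu+1;-\beta x)$, which yields the factor $H(j_{k1},j_{k1}+j_{k2},j_{k1}+1,-\chi_{k2}x/(\rho_d\chi_{k1}))$ (using the symmetry of ${}_2F_1$ in its first two parameters) and matches the stated CDF after simplification.

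The main obstacle is the independence claim underpinning the product form of the joint density, since the ratio integral and the hypergeometric evaluation are routine once it is in place. Strictly, $\text{DS}_k$ and $\text{IN}_k$ are only approximately independent: under pilot contamination the same-pilot cross terms $U_{kk_1}^2$ ($k_1\in\mathcal{P}_k$) share the factor $\|\hat{\mathbf{h}}_{mk}\|^2$ with the desired signal, so exact factorization fails and the statement is a correlation-negligibility approximation. I would therefore phrase the conclusion as an approximation consistent with the CLT-based Gamma fits of Lemma \ref{lemma of U1 downlink MRT} and Lemma \ref{lemma of IN downlink MRT}, relying on the estimation error being independent of the MMSE estimate and on the cross-user coherence being asymptotically small for large $M$ and $N$.
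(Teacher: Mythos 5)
Your proposal is correct and follows essentially the same route as the paper's Appendix C: both treat $\gamma_k$ as a ratio of the independent Gamma variables $\text{DS}_k^{\text{MRT}} \sim \text{Gamma}(j_{k1},\rho_d\chi_{k1})$ and $\text{IN}_k^{\text{MRT}} \sim \text{Gamma}(j_{k2},\chi_{k2})$, evaluate the same ratio-density integral $\int_0^\infty y\, f_{\text{DS}_k^{\text{MRT}}}(xy)\, f_{\text{IN}_k^{\text{MRT}}}(y)\, dy$ as a Gamma integral to get the PDF, and obtain the CDF by integrating that PDF. If anything, you supply two details the paper leaves implicit: the explicit ${}_2F_1$ identity that yields the hypergeometric CDF (the paper merely states $F_{\gamma_k}^{\text{MRT}}(x)=\int_0^x f_{\gamma_k}^{\text{MRT}}(x_1)\,dx_1$ "directly"), and the honest caveat that the independence of $\text{DS}_k$ and $\text{IN}_k$ is itself an approximation under pilot contamination, which the paper asserts without comment.
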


\begin{proof}
  Please refer to Appendix \ref{appendix C}.
\end{proof}

\subsection{Full-Pilot Zero-Forcing}
Different from MRT precoding, FZF precoding can suppress inter-user interference. The FZF precoding vector used by AP $m$ towards user $k$ is given in (\ref{proceding vector}), and the denominator of the precoding vector is given in closed form by \cite{InterdonatoLocal2020}
\begin{equation}\label{close form of precoder vector}
  \begin{aligned}
    \mathbb{E} \left \{ \left \| \bar{\mathbf{H}}_{m}\left [ \bar{\mathbf{H}}_{m}^H\bar{\mathbf{H}}_{m} \right ]^{-1} \mathbf{e}_{i_k}\right \|^2\right \}  &=\mathbb{E}\left \{ \left [ \left ( \bar{\mathbf{H}}_{m}^H\bar{\mathbf{H}}_{m} \right )^{-1} \right ]_{i_ki_k}  \right \}  \\
    &\overset{(a)}{=}  \frac{\kappa_{mk}^2}{(N-l_p)c_{mk}},
  \end{aligned}
\end{equation}
where $(a)$ is obtained based on \cite[Lemma 2.10]{edelman_rao_2005}, for a $l_p \times l_p$ central complex Wishart matrix with $M$ degrees of freedom satisfying $M \ge l_p +1$.
Interference between users using different pilot sequences is suppressed and the product between $\hat{\mathbf{h}}_{mk}^H$ and $\mathbf{b}_{mk'}$ can be calculated as follows \cite{InterdonatoLocal2020}:
\begin{equation}\label{channel times precoding vextor in FZF}
  \begin{aligned}
   &\alpha_{mkk_1}=\hat{\mathbf{h}}_{mk}^H\mathbf{b}_{mk_1} \\
   & = (\kappa_{mk}\bar{\mathbf{H}}_m\mathbf{e}_{i_k})^H\bar{\mathbf{H}}_{m}\left (\bar{\mathbf{H}}_{m}^H\bar{\mathbf{H}}_{m} \right )^{-1}\mathbf{e}_{i_{k_1}}\sqrt{\frac{(N-l_p)c_{mk}}{\kappa_{mk}^2}} \\
   & = \left \{
       \begin{aligned}
          & 0, && k_1 \notin \mathcal{P}_k,\\
          &\sqrt{(N-l_p)c_{mk}}, && k_1 \in \mathcal{P}_k.
       \end{aligned}
        \right .
  \end{aligned}
\end{equation}
Then we can obtain the value of $\text{DS}_k$ based on (\ref{channel times precoding vextor in FZF}) as follows:
\begin{equation}\label{expression of Uk1 in FZF}
  \begin{aligned}
     \text{DS}_k^{\text{FZF}} &= \rho_d\left ( {\textstyle \sum_{m\in \mathcal{M}}}\sqrt{\eta_{mk}}\alpha_{mkk}\right )^2 \\
                              &= \rho_d\left ({\textstyle \sum_{m\in \mathcal{M}}} \sqrt{\eta_{mk}(N-l_p)c_{mk}}\right )^2,
  \end{aligned}
\end{equation}
where $\text{DS}_k^{\text{FZF}}$ represents the $\text{DS}_k$ with FZF precoding.

Similar to the analysis under MRT precoding, we approximate the distribution of $\text{IN}_k^{\text{FZF}}$ under FZF precoding as a Gamma distribution in the following lemma.
\begin{lemma}\label{lemma of IN downlink FZF}
  When FZF precoding is used for downlink data transmission, the distribution of $\text{IN}_k^{\text{FZF}} = \rho_d\sum_{k_1\ne k}U_{kk_1}^2+\rho_d\sum_{k_1 \in \mathcal{K} }U_{kk_1}^3+z_k$ can be approximated as a Gamma distribution with the shape parameter $j_{k2}$ and the scale parameter $\chi_{k2}$, i.e., $\text{IN}_k^{\text{FZF}} \sim Gamma(j_{k2},\chi_{k2})$. The shape parameter $j_{k2}$ and scale parameter $\chi_{k2}$ are given by
  \begin{equation}\label{gamma paraments IN downlink FZF}
      j_{k2} = \frac{u_{\text{IN}_k^{\text{FZF}}}^2}{u_{\text{IN}_k^{\text{FZF}}}^{(2)} - u_{\text{IN}_k^{\text{FZF}}}^2} ,\ \chi_{k2} = \frac{u_{\text{IN}_k^{\text{FZF}}}^{(2)} - u_{\text{IN}_k^{\text{FZF}}}^2}{u_{\text{IN}_k^{\text{FZF}}}},
  \end{equation}
  where $u_{\text{IN}_k^{\text{FZF}}} = \mathbb{E}\left \{ \text{IN}_k^{\text{FZF}} \right \} $ and $u_{\text{IN}_k^{\text{FZF}}}^{(2)} = \mathbb{E}\left \{ \left ( \text{IN}_k^{\text{FZF}} \right )^2 \right \}$ are first and second moments of $\text{IN}_k^{\text{FZF}}$, respectively. The expressions for $u_{\text{IN}_k^{\text{FZF}}}$ and $u_{\text{IN}_k^{\text{FZF}}}^{(2)}$ are given in (\ref{fisrst order of IN downlink FZF}) and (\ref{second order of IN downlink FZF}), respectively.
\end{lemma}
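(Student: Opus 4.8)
The plan is to establish the Gamma approximation by the method of moments, mirroring the treatment of Lemma \ref{lemma of IN downlink MRT}. With a large number of APs and antennas, $\text{IN}_k^{\text{FZF}}$ is an aggregation of a deterministic offset — the coherent pilot-contamination interference together with the noise power — and many weighted non-negative channel-estimation-error terms $\rho_d U_{kk_1}^3$, each being, conditioned on the precoders, a scaled squared-magnitude of a complex Gaussian. Such an aggregate is accurately described by a single Gamma law whose two parameters are fixed by matching its first two moments; equations (\ref{gamma paraments IN downlink FZF}) are exactly this matching, since a $\text{Gamma}(j,\chi)$ variable has mean $j\chi$ and variance $j\chi^2$, whence $j_{k2}=\mu^2/\sigma^2$ and $\chi_{k2}=\sigma^2/\mu$ with $\sigma^2=u^{(2)}-u^2$. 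It therefore suffices to evaluate the two moments $u_{\text{IN}_k^{\text{FZF}}}$ and $u_{\text{IN}_k^{\text{FZF}}}^{(2)}$, which is where the work lies.

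For the first moment I would split $\text{IN}_k^{\text{FZF}}$ into its three constituents. The estimate-based interference $\rho_d\sum_{k_1\ne k}U_{kk_1}^2$ is, by (\ref{channel times precoding vextor in FZF}), \emph{deterministic}: $\alpha_{mkk_1}$ vanishes unless $k_1\in\mathcal{P}_k$ and equals $\sqrt{(N-l_p)c_{mk}}$ otherwise, so this term contributes a closed-form coherent sum over $\mathcal{P}_k\setminus\{k\}$ of the same shape as $\text{DS}_k^{\text{FZF}}$ in (\ref{expression of Uk1 in FZF}). For the error-based interference I would use that the MMSE error $\bar{\mathbf{h}}_{mk}$ is independent of the pilot observations, hence of every FZF precoder $\mathbf{b}_{mk_1}$; conditioning on the precoders, $\bar{\mathbf{h}}_{mk}^H\mathbf{b}_{mk_1}$ is zero-mean complex Gaussian, the across-$m$ independence of the errors kills the cross terms, and the unit expected norm $\mathbb{E}\{\|\mathbf{b}_{mk_1}\|^2\}=1$ gives $\mathbb{E}\{U_{kk_1}^3\}=\sum_m \eta_{mk_1}(\beta_{mk}-c_{mk})$. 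Adding the deterministic noise power then yields (\ref{fisrst order of IN downlink FZF}).

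The second moment is the main obstacle. Expanding $(\text{IN}_k^{\text{FZF}})^2$, the purely deterministic contributions and the deterministic-times-mean cross terms are immediate from the first-moment results, so the difficulty concentrates in $\mathbb{E}\{(\sum_{k_1}U_{kk_1}^3)^2\}=\sum_{k_1,k_2}\mathbb{E}\{U_{kk_1}^3U_{kk_2}^3\}$. Writing $w_{k_1}=\sum_m\sqrt{\eta_{mk_1}}\,\bar{\mathbf{h}}_{mk}^H\mathbf{b}_{mk_1}$, the pair $(w_{k_1},w_{k_2})$ is, conditioned on the precoders, jointly circularly-symmetric complex Gaussian, so the fourth-moment identity $\mathbb{E}\{|w_{k_1}|^2|w_{k_2}|^2\}=\mathbb{E}\{|w_{k_1}|^2\}\mathbb{E}\{|w_{k_2}|^2\}+|\mathbb{E}\{w_{k_1}w_{k_2}^*\}|^2$ reduces everything to second-order precoder statistics. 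The crux is that, after substituting $\mathbf{b}_{mk_1}\propto\bar{\mathbf{H}}_m(\bar{\mathbf{H}}_m^H\bar{\mathbf{H}}_m)^{-1}\mathbf{e}_{i_{k_1}}$, both $\|\mathbf{b}_{mk_1}\|^2$ and $\mathbf{b}_{mk_2}^H\mathbf{b}_{mk_1}$ become entries of the inverse Wishart matrix $(\bar{\mathbf{H}}_m^H\bar{\mathbf{H}}_m)^{-1}$, so I would need its \emph{second-order} moments $\mathbb{E}\{[(\cdot)^{-1}]_{ii}[(\cdot)^{-1}]_{jj}\}$ and $\mathbb{E}\{|[(\cdot)^{-1}]_{ij}|^2\}$ — one order beyond the single inverse moment already used in (\ref{close form of precoder vector}). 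These follow from the complex-Wishart moment results behind \cite{edelman_rao_2005}, and the remaining labor is the bookkeeping of which index pairs are correlated: only the coupling $|\mathbb{E}\{w_{k_1}w_{k_2}^*\}|^2$ survives, governed by the shared receiver error $\bar{\mathbf{h}}_{mk}$ and the pilot-group structure. Assembling these pieces gives (\ref{second order of IN downlink FZF}), and substituting both moments into (\ref{gamma paraments IN downlink FZF}) completes the proof.
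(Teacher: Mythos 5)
Your Gamma moment-matching frame and your first-moment computation coincide with the paper's proof: the same-pilot interference is deterministic by (\ref{channel times precoding vextor in FZF}) (the paper collects it as $U_k^2$ in (\ref{expression of Uk2 in FZF})), the independence of the MMSE error from the precoders gives $\mathbb{E}\{U_{kk_1}^3\}=\sum_{m\in\mathcal{M}}\eta_{mk_1}(\beta_{mk}-c_{mk})$, and adding the noise power yields (\ref{fisrst order of IN downlink FZF}). The genuine gap is at the second moment. The paper's (\ref{second order of IN downlink FZF}) is \emph{not} the outcome of the exact joint-Gaussian computation you outline; it is obtained by (i) taking $U_{kk_1}^3$ and $U_{kk_2}^3$ to be uncorrelated for $k_1\ne k_2$, so that every cross term enters as the product $\mathbb{E}\{U_{kk_1}^3\}\mathbb{E}\{U_{kk_2}^3\}$, and (ii) computing each $\mathbb{E}\{(U_{kk_1}^3)^2\}$ from the per-AP moments (\ref{second and fourth of psi downlink FZF}) together with cross-AP independence, exactly as in the proof of Lemma \ref{lemma of IN downlink MRT}; no second-order inverse-Wishart moments are used anywhere. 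Your route, by contrast, retains the coupling $|\mathbb{E}\{w_{k_1}w_{k_2}^*\}|^2$, and that coupling does not vanish: for $k_1,k_2$ in the same pilot group the FZF precoders coincide exactly, $\mathbf{b}_{mk_1}=\mathbf{b}_{mk_2}$ (the direction depends on $k_1$ only through $\mathbf{e}_{i_{k_1}}$, and the normalization $\kappa_{mk_1}^2/c_{mk_1}$ is constant within a pilot group), so $w_{k_1}$ and $w_{k_2}$ differ only through the power-control weights and their coupling is of the same order as the retained cross terms. An exact assembly therefore produces a strictly larger second moment than (\ref{second order of IN downlink FZF}); your closing claim that ``assembling these pieces gives (\ref{second order of IN downlink FZF})'' is precisely the step that fails. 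To prove the lemma as stated you must explicitly discard the cross-user couplings as part of the approximation — which is what the paper tacitly does — or show they are negligible, which is false for same-pilot users.

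Two smaller discrepancies of the same kind: first, an exact execution of your Wishart program would also contradict the paper's diagonal terms, since conditioned on the precoder $\mathbb{E}\{|\psi_{mkk_1}|^4\}=2\eta_{mk_1}^2(\beta_{mk}-c_{mk})^2\,\mathbb{E}\{\|\mathbf{b}_{mk_1}\|^4\}$, and under FZF $\|\mathbf{b}_{mk_1}\|^2$ is a normalized diagonal entry of an inverse Wishart matrix, so $\mathbb{E}\{\|\mathbf{b}_{mk_1}\|^4\}$ is a ratio determined by $N-l_p$, not the MRT value $(N+1)/N$ that the paper carries over into (\ref{second and fourth of psi downlink FZF}). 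Second, you fold the noise into a deterministic offset, whereas in (\ref{second order of IN downlink FZF}) the noise enters as a unit-mean random power whose second moment is $2$, not $1$ (the trailing terms $2U_k^2+2$). So your proposal, executed exactly, lands on a different expression in several places; the lemma's stated formulas rest on the paper's simpler approximate bookkeeping, and a faithful proof must present steps (i) and (ii) as approximations rather than promise an exact derivation.
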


\begin{proof}
  Define $U_{k}^2 = \rho_d\sum_{k_1\ne k}U_{kk_1}^2$, similar to (\ref{expression of Uk1 in FZF}), the value of $U_{k}^2$ can be expressed as follows:
  \begin{equation}\label{expression of Uk2 in FZF}
    U_{k}^2 = \rho_d\sum_{k_1\ne k} \left ( \sum_{m\in \mathcal{M}}\sqrt{\eta_{mk_1}} \alpha_{mkk_1} \right )^2.
  \end{equation}
  Similar to (\ref{second and fourth of psi downlink MRT}), the second and fourth moment of $\psi^d_{mkk_1}$ can be calculated as follows:
  \begin{equation}\label{second and fourth of psi downlink FZF}
    \begin{aligned}
      &\mathbb{E}\left \{ \left | \psi_{mkk_1} \right |^2  \right \}  = \eta_{mk_1} \left ( \beta_{mk}-c_{mk} \right )\\
      &\mathbb{E}\left \{ \left | \psi_{mkk_1} \right |^4  \right \}  = \eta_{mk_1}^2\frac{2\left ( N+1 \right ) }{N} \left ( \beta_{mk}-c_{mk} \right )^2\\
    \end{aligned}
  \end{equation}
  Then we can obtain the first moment and second moment of $U_{kk_1}^3$ under FZF precoding, i.e., $\mathbb{E}\left \{ U_{kk_1}^3 \right \}$, $\mathbb{E}\left \{ \left ( U_{kk_1}^3 \right )^2 \right \}$, which is similar to (\ref{first and second moment of Uk3}) in the proof of Lemma \ref{lemma of IN downlink MRT}.
  Then first moment of $\text{IN}_k^{\text{FZF}}$ can be expressed as follows:
  \begin{equation}\label{fisrst order of IN downlink FZF}
    u_{\text{IN}_k^{\text{FZF}}} = U_{k}^2 +  \rho_d\sum_{k_1\in \mathcal{K} }\mathbb{E}\left \{ U_{kk_1}^3 \right \}  + 1.
  \end{equation}
  The second moments of $\text{IN}_k^{\text{FZF}}$ can be calculated as follows:
  \begin{equation}\label{second order of IN downlink FZF}
    \begin{aligned}
      &u_{\text{IN}_k^{\text{FZF}}}^{(2)} \\
      &= \rho_d^2\sum_{k_1\in \mathcal{K} } \mathbb{E}\left \{ \left ( U_{kk_1}^3 \right )^2 \right \} +2\rho_d\left (  U_{k}^2 + 1 \right ) \sum_{k_1\in \mathcal{K} }\mathbb{E}\left \{ U_{kk_1}^3 \right \}  \\
      &  + \rho_d^2\sum_{k_1\in \mathcal{K} }\sum_{k_2\ne k_1}\mathbb{E}\left \{ U_{kk_1}^3 \right \}\mathbb{E}\left \{  U_{kk_2}^3 \right \}+ \left ( U_{k}^2 \right )^2  + 2 U_{k}^2 + 2 .
    \end{aligned}
  \end{equation}
\end{proof}
We calculate the value of $\text{DS}_k^{\text{FZF}}$ and obtain the distribution of $\text{IN}_k^{\text{FZF}}$ above. Then the distribution of SINR under the  FZF precoding can be obtained as shown in the following theorem.
\begin{theorem}\label{PDF of gamma when FZF downlink}
  In the CF mMIMO system with pilot contamination, when FZF precoding is used for downlink data transmission, the PDF and CDF of the SINR for user $k$, $\forall k \in \mathcal{K}$ can be expressed as follows:
  \begin{equation}\label{pdf of gamma downlink FZF}
    \begin{aligned}
      &f_{\gamma_k}^{\text{FZF}}(x) = \frac{1}{\Gamma(j_{k2})\chi_{k2}^{j_{k2}}x}\left ( \frac{\text{DS}_k^{\text{FZF}}}{x}  \right ) ^{j_{k2}}e^{-\frac{\text{DS}_k^{\text{FZF}}}{x\chi_{k2}}}, \\
      &F_{\gamma_k}^{\text{FZF}}(x) = 1-\frac{1}{\Gamma(j_{k2})}\bar{\gamma}\left (j_{k2} ,\frac{\text{DS}_k^{\text{FZF}}}{x\chi_{k2}} \right ),
    \end{aligned}
  \end{equation}
  where $\text{DS}_k^{\text{FZF}}$ is given in (\ref{expression of Uk1 in FZF}), $j_{k2}$, and $\chi_{k2}$ are given in Lemma \ref{lemma of IN downlink FZF}.
\end{theorem}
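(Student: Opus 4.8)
The plan is to exploit the key structural simplification that FZF precoding introduces relative to the MRT case: because FZF fully suppresses interference between users assigned different pilots, the desired-signal term $\text{DS}_k^{\text{FZF}}$ in (\ref{expression of Uk1 in FZF}) is a \emph{deterministic} constant depending only on the large-scale fading coefficients $c_{mk}$ and the power-control coefficients $\eta_{mk}$, rather than a Gamma-distributed random variable as in Lemma \ref{lemma of U1 downlink MRT}. As a result $\gamma_k = \text{DS}_k^{\text{FZF}}/\text{IN}_k^{\text{FZF}}$ is simply a fixed constant divided by a single random variable, so no ratio of two independent Gamma variables arises and the hypergeometric function of Theorem \ref{PDF of gamma when MRT downlink} is replaced by a much simpler closed form.

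First I would invoke Lemma \ref{lemma of IN downlink FZF} to treat $\text{IN}_k^{\text{FZF}}$ as a Gamma random variable with shape $j_{k2}$ and scale $\chi_{k2}$, whose PDF and CDF take the standard forms used in (\ref{PDF of gamma U}). Writing $c := \text{DS}_k^{\text{FZF}}$ for the deterministic numerator and $Y := \text{IN}_k^{\text{FZF}}$ for the Gamma-distributed denominator, the target variable is the reciprocal-scaled quantity $\gamma_k = c/Y$.

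Next I would obtain the PDF through the monotone change-of-variables formula. Since $y = c/x$ yields $|dy/dx| = c/x^2$, substitution gives $f_{\gamma_k}^{\text{FZF}}(x) = f_Y(c/x)\,(c/x^2)$; collecting the powers of $c/x$ reproduces the claimed PDF in (\ref{pdf of gamma downlink FZF}). For the CDF I would use the fact that $x \mapsto c/x$ is strictly decreasing on $(0,\infty)$, so the event $\{\gamma_k \le x\}$ coincides with $\{Y \ge c/x\}$; hence $F_{\gamma_k}^{\text{FZF}}(x) = 1 - F_Y(c/x)$, and inserting the Gamma CDF written via the incomplete gamma function $\bar{\gamma}$ yields the stated expression.

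The genuinely nontrivial input is Lemma \ref{lemma of IN downlink FZF}, which I am entitled to assume; given the Gamma approximation for $\text{IN}_k^{\text{FZF}}$, the remaining derivation is a routine one-dimensional transformation. The step that most warrants care is the evaluation of $\text{DS}_k^{\text{FZF}}$ as a deterministic quantity, which rests on (\ref{channel times precoding vextor in FZF}): the inner product $\hat{\mathbf{h}}_{mk}^H\mathbf{b}_{mk_1}$ vanishes whenever $k_1 \notin \mathcal{P}_k$ and equals $\sqrt{(N-l_p)c_{mk}}$ otherwise, which is precisely the orthogonality property that distinguishes FZF from MRT and makes the numerator non-random.
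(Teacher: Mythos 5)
Your proposal is correct and follows essentially the same route as the paper: both treat $\text{DS}_k^{\text{FZF}}$ as the deterministic constant given by (\ref{expression of Uk1 in FZF}) via (\ref{channel times precoding vextor in FZF}), invoke Lemma \ref{lemma of IN downlink FZF} for the Gamma law of $\text{IN}_k^{\text{FZF}}$, and reduce the problem to the one-dimensional transformation $\gamma_k = c/Y$, using the event equivalence $\{\gamma_k \le x\} = \{Y \ge c/x\}$ for the CDF. The only cosmetic difference is ordering — the paper derives the CDF first and differentiates to get $f_{\gamma_k}^{\text{FZF}}(x) = \frac{\text{DS}_k^{\text{FZF}}}{x^2} f_{\text{IN}_k^{\text{FZF}}}\bigl( \frac{\text{DS}_k^{\text{FZF}}}{x} \bigr)$, whereas you obtain the PDF directly by the Jacobian change of variables — which yields the identical expression.
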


\begin{proof}
  The CDF of $\gamma_k$ when using the FZF precoder can be calculated as follows:
  \begin{equation}\label{CDF of gammak FZF}
    \begin{aligned}
      F_{\gamma_k}^{\text{FZF}}(x) &= P\left \{ \frac{\text{DS}_k^{\text{FZF}}}{\text{IN}_k^{\text{FZF}}}\le x\right \} = P\left \{ \text{IN}_k^{\text{FZF}} \ge\frac{\text{DS}_k^{\text{FZF}}}{x} \right \} \\
      & = 1 - F_{\text{IN}_k^{\text{FZF}}}\left ( \frac{\text{DS}_k^{\text{FZF}}}{x} \right ),
    \end{aligned}
  \end{equation}
  where $F_{\text{IN}_k^{\text{FZF}}}\left ( \cdot \right )$ represents the CDF of $\text{IN}_k^{\text{FZF}}$.

  Based on (\ref{CDF of gammak FZF}), the PDF of $\gamma_k$ when using FZF precoder can be calculated as follows:
  \begin{equation}\label{PDF of gammak FZF}
    \begin{aligned}
      f_{\gamma_k}^{\text{FZF}}(x) &= \frac{\mathrm{d} F_{\gamma_k}^{\text{FZF}}(x)}{\mathrm{d} x} = -\frac{\mathrm{d} F_{\text{IN}_k}^{\text{FZF}}(\frac{\text{DS}_k^{\text{FZF}}}{x})}{\mathrm{d} x}\\
                      &= \frac{\text{DS}_k^{\text{FZF}}}{x^2}f_{\text{IN}_k^{\text{FZF}}}\left ( \frac{\text{DS}_k^{\text{FZF}}}{x}  \right ),
    \end{aligned}
  \end{equation}
  where $f_{\text{IN}_k^{\text{FZF}}}\left ( \cdot \right )$ represents the PDF of $\text{IN}_k^{\text{FZF}}$.
\end{proof}

\section{Performance Analysis}
The lower bound of achievable rate for users in the CF mMIMO system has been investigated \cite{InterdonatoLocal2020}.
However, to provide a more comprehensive characterization of the system's performance, this section focuses on deriving the achievable rate of users under both MRT and FZF precoding. This analysis is based on the previously obtained distribution of SINR, allowing for a more accurate assessment of system performance.
Besides, we also derive the outage probability of the CF mMIMO system.
\subsection{Maximum Rate Transmission}
When MRT precoding is used in the CF mMIMO system, then the lower bound of achievable downlink rate is given in (\ref{LB of AR MRT}) \cite{InterdonatoLocal2020}.
The achievable rate in the CF mMIMO system with MRT precoding employed is given in the following lemma.
\begin{lemma}\label{lemma of AR MRT}
  In the CF mMIMO system with pilot contamination, when MRT precoding is employed for downlink data transmission, the achievable rate for user $k$, $k \in \mathcal{K}$, is given in (\ref{achievable rate MRT}),
  where $H^P\left ( \left \{ \cdot \right \},\left \{ \cdot  \right \},\left \{ \cdot  \right \}   \right ) $ represents the generalized hypergeometric function \cite{GeneralizedHypergeometricFunction}. The values of $j_{k1}$ and $\chi_{k1}$ are obtained from Lemma \ref{lemma of U1 downlink MRT}, while $j_{k2}$ and $\chi_{k2}$ are derived from Lemma \ref{lemma of IN downlink MRT}.
\end{lemma}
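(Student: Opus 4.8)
The plan is to obtain the achievable rate as the ergodic rate $R_k^{\text{MRT}} = \mathbb{E}\{\log_2(1+\gamma_k)\}$, where the expectation is taken over the SINR distribution established in Theorem \ref{PDF of gamma when MRT downlink}. Writing this expectation as an integral against the PDF in (\ref{pdf of gamma downlink MRT}), I would begin from
\[
  R_k^{\text{MRT}} = \int_0^{\infty} \log_2(1+x)\, f_{\gamma_k}^{\text{MRT}}(x)\, dx = \frac{1}{\ln 2}\int_0^{\infty} \ln(1+x)\, f_{\gamma_k}^{\text{MRT}}(x)\, dx .
\]
Substituting the closed-form PDF and factoring $\frac{1}{\chi_{k2}}$ out of the bracket $\frac{1}{\chi_{k2}}+\frac{x}{\rho_d\chi_{k1}}$ collects all $x$-independent quantities into a prefactor built from $\Gamma(j_{k1})$, $\Gamma(j_{k2})$, $\Gamma(j_{k1}+j_{k2})$ and powers of $\rho_d\chi_{k1}$ and $\chi_{k2}$, leaving the core integral
\[
  I = \int_0^{\infty} \ln(1+x)\, x^{j_{k1}-1}\Bigl(1+\tfrac{\chi_{k2}}{\rho_d\chi_{k1}}x\Bigr)^{-(j_{k1}+j_{k2})} dx .
\]

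The central and hardest step is evaluating $I$ in closed form; the factor $\ln(1+x)$ is exactly what prevents $I$ from collapsing to an elementary Beta function. I would treat the logarithm by parameter differentiation, writing $\ln(1+x)=\frac{\partial}{\partial s}(1+x)^{s}\big|_{s=0}$ and exchanging differentiation and integration, so that $I$ becomes the $s$-derivative at $s=0$ of the two-factor Euler integral $\int_0^{\infty} x^{j_{k1}-1}(1+x)^{s}(1+ax)^{-(j_{k1}+j_{k2})}\,dx$ with $a=\chi_{k2}/(\rho_d\chi_{k1})$. This auxiliary integral is a Gauss hypergeometric function $H$ (that is, ${}_2F_1$) via the standard Euler representation, valid because $j_{k1}>0$ and $j_{k2}>0$ guarantee convergence at both endpoints (both parameters are ratios of a squared mean to a variance, hence positive). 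Differentiating the ${}_2F_1$ in its parameter and evaluating at $s=0$ produces the generalized hypergeometric function $H^P(\{\cdot\},\{\cdot\},\{\cdot\})$ appearing in (\ref{achievable rate MRT}); equivalently, $I$ can be read off from a table integral of the form $\int_0^{\infty} x^{\mu-1}(1+bx)^{-\nu}\ln(1+x)\,dx$, whose Mellin--Barnes evaluation yields the same generalized hypergeometric.

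Finally I would recombine the evaluated $I$ with the Gamma-function prefactor and the $\frac{1}{\ln 2}$, simplifying the powers of $\chi_{k2}$ (the factor $\chi_{k2}^{\,j_{k1}+j_{k2}}$ from the pulled-out constant combines with the $\chi_{k2}^{-j_{k2}}$ already present in the PDF) to reach the stated closed form (\ref{achievable rate MRT}). The only delicate points are the interchange of the $s$-derivative with the improper integral, which I would justify by dominated convergence on a neighborhood of $s=0$ using the convergence margin $j_{k2}>0$, and the bookkeeping of the hypergeometric parameters, which is routine once the base ${}_2F_1$ is identified.
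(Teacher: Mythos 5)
Your proposal is correct and follows the same route as the paper: the paper's proof simply states that $R_k^{\text{MRT}} = \int_{0}^{\infty} \log_2(1+x)\, f_{\gamma_k}^{\text{MRT}}(x)\, dx$ with $f_{\gamma_k}^{\text{MRT}}$ taken from Theorem \ref{PDF of gamma when MRT downlink}, which is exactly your starting point. Your additional detail on evaluating the integral (parameter differentiation of the Euler-type integral, equivalently a Mellin--Barnes table evaluation yielding the ${}_2F_1$ and generalized hypergeometric terms) just fills in the computation that the paper leaves implicit behind the phrase ``can be calculated directly.''
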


\begin{proof}
  When MRT precoding is employed for downlink data transmission, the PDF of SINR for user $k$, $k \in \mathcal{K}$, is given in Theorem \ref{PDF of gamma when MRT downlink}. Then the achievable rate for user $k$ can be calculated directly by $R_k^{\text{MRT}} = \int_{0}^{\infty} \log_2(1+x)f_{\gamma_k}^{\text{MRT}}(x)dx$.
\end{proof}

Consider the encoding rate of user $k$ is $r_k^{\text{MRT}}$ in the CF mMIMO system unser MRT precoding. An outage event for user $k$ when the SINR cant support the target rate $r_k^{\text{MRT}}$. The outage probability of user $k$ in the CF mMIMO system is given by
\begin{equation}\label{outage probability MRT}
  \begin{aligned}
    P_{\text{out},k}^{\text{MRT}}\left ( r_k^{\text{MRT}} \right ) &= P\left ( \log_2\left ( 1 + \gamma_k \le  r_k^{\text{MRT}}\right ) \right )\\
                                                                   &=F_{\gamma_k}^{\text{MRT}}\left ( 2^{r_k^{\text{MRT}}}  - 1\right ).
   \end{aligned}
\end{equation}

\subsection{Full-Pilot Zero-Forcing}
When FZF precoding is used in the CF mMIMO system, then the lower bound of achievable downlink rate is given (\ref{LB of AR FZF}) \cite{InterdonatoLocal2020}.

The achievable rate in the CF mMIMO system with FZF precoding employed is given in the following lemma.

\begin{lemma}\label{lemma of AR FZF}
  In the CF mMIMO system with pilot contamination, when FZF precoding is employed for downlink data transmission, the achievable rate for user $k$, $k \in \mathcal{K}$, is given in (\ref{achievable rate FZF}), where $\text{DS}_k^{\text{FZF}}$ is given in (\ref{expression of Uk1 in FZF}), while $j_{k2}$ and $\chi_{k2}$ are obtained from Lemma \ref{lemma of IN downlink FZF}.
\end{lemma}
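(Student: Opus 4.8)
The plan is to mirror the one-line argument used for the MRT rate in Lemma \ref{lemma of AR MRT}. The achievable rate is the expectation of the instantaneous spectral efficiency over the fading statistics, so I would begin from
\begin{equation*}
  R_k^{\text{FZF}} = \mathbb{E}\left \{ \log_2(1+\gamma_k) \right \} = \int_{0}^{\infty} \log_2(1+x)\, f_{\gamma_k}^{\text{FZF}}(x)\, dx,
\end{equation*}
and insert the closed-form density $f_{\gamma_k}^{\text{FZF}}$ furnished by Theorem \ref{PDF of gamma when FZF downlink}. The key structural observation is that under FZF the desired signal $\text{DS}_k^{\text{FZF}}$ in (\ref{expression of Uk1 in FZF}) is deterministic, while only $\text{IN}_k^{\text{FZF}} \sim \text{Gamma}(j_{k2},\chi_{k2})$ is random; hence $f_{\gamma_k}^{\text{FZF}}$ is of inverse-Gamma type and the rate is nothing but $\mathbb{E}\{\log_2(1+\text{DS}_k^{\text{FZF}}/Y)\}$ with $Y$ Gamma-distributed.

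To make this explicit I would apply the change of variable $y = \text{DS}_k^{\text{FZF}}/x$, which maps $f_{\gamma_k}^{\text{FZF}}$ back onto the Gamma density of $\text{IN}_k^{\text{FZF}}$ and recasts the rate as
\begin{equation*}
  R_k^{\text{FZF}} = \frac{1}{\Gamma(j_{k2})\chi_{k2}^{j_{k2}}} \int_{0}^{\infty} \log_2\!\left ( 1+\frac{\text{DS}_k^{\text{FZF}}}{y} \right ) y^{j_{k2}-1} e^{-y/\chi_{k2}}\, dy.
\end{equation*}
The task then reduces to evaluating this single power-exponential-times-logarithm integral in closed form, which is what (\ref{achievable rate FZF}) should report.

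The main obstacle is precisely this closed-form evaluation. The standard device, and the one I expect to match the generalized hypergeometric form already used in the MRT rate (\ref{achievable rate MRT}), is to represent the logarithm through its Mellin--Barnes (Meijer $G$-function) form, merge it with the factor $y^{j_{k2}-1} e^{-y/\chi_{k2}}$, and invoke the known integral of a product of a power, an exponential, and a Meijer $G$-function (equivalently, rewrite the exponential itself as a Meijer $G$-function and apply the product-of-two-$G$-functions formula). An equivalent route is to split $\log_2(1+\text{DS}_k^{\text{FZF}}/y) = (\ln(y+\text{DS}_k^{\text{FZF}}) - \ln y)/\ln 2$, where the $\ln y$ term contributes a digamma factor $\psi(j_{k2})+\ln\chi_{k2}$ and the $\ln(y+\text{DS}_k^{\text{FZF}})$ term yields the special-function part. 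Convergence at both endpoints is guaranteed by $j_{k2}>0$ (Lemma \ref{lemma of IN downlink FZF}), so the only real care needed is in correctly carrying the constants $\text{DS}_k^{\text{FZF}}$, $\chi_{k2}$, and the $1/\ln 2$ through the reduction so that the final expression coincides with (\ref{achievable rate FZF}).
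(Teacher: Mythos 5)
Your proposal is correct and follows essentially the same route as the paper: the paper's (one-line) proof likewise defines $R_k^{\text{FZF}} = \int_{0}^{\infty}\log_2(1+x)\,f_{\gamma_k}^{\text{FZF}}(x)\,dx$ using the PDF from Theorem \ref{PDF of gamma when FZF downlink} and evaluates it directly, exactly as you do. Your additional steps — the substitution $y=\text{DS}_k^{\text{FZF}}/x$ recovering the Gamma density of $\text{IN}_k^{\text{FZF}}$, and the Mellin--Barnes/Meijer-$G$ reduction of the resulting integral — are valid and in fact supply the evaluation details that the paper omits.
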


\begin{proof}
  The proof is similar to the proof of Lemma \ref{lemma of AR MRT}, which is omitted for simplicity.
\end{proof}

Consider the encoding rate of user $k$ is $r_k^{\text{MRT}}$ in the CF mMIMO system under FZF precoding. The outage probability of user $k$ is given by
\begin{equation}\label{outage probability FZF}
  \begin{aligned}
    P_{\text{out},k}^{\text{MRT}}\left ( r_k^{\text{MRT}} \right ) =F_{\gamma_k}^{\text{FZF}}\left ( 2^{r_k^{\text{FZF}}}  - 1\right ).
   \end{aligned}
\end{equation}

\begin{figure*}
    \begin{equation}\label{LB of AR MRT}
      \bar{R}_k^{\text{MRT}} = \log_2\left ( 1 + \frac{N\rho_d\left ( \sum_{m \in \mathcal{M}}\sqrt{\eta_{mk}c_{mk}}\right )^2}{N\rho_d\sum_{k_1 \in \mathcal{P}_k \setminus \left \{ k \right \} }\left ( \sum_{m \in \mathcal{M}}\sqrt{\eta_{mk}c_{mk}}\right )^2 + \rho_d\sum_{m \in \mathcal{M} }\sum_{k}^{\mathcal{K} } \eta_{mk}\beta _{mk} +1}  \right ).
    \end{equation}
    \begin{equation}\label{LB of AR FZF}
      \begin{aligned}
       &\bar{R}_k^{\text{FZF}} = \log_2\left ( 1 + \frac{\left ( N-l_p \right ) \rho_d\left ( \sum_{m \in \mathcal{M}}\sqrt{\eta_{mk}c_{mk}}\right )^2}{\left ( N-l_p \right )\rho_d\sum_{k_1 \in \mathcal{P}_k \setminus \left \{ k \right \} }\left ( \sum_{m \in \mathcal{M}}\sqrt{\eta_{mk}c_{mk}}\right )^2 + \rho_d\sum_{m \in \mathcal{M} }\sum_{k}^{\mathcal{K} } \eta_{mk}\beta _{mk} +1}  \right ).
      \end{aligned}
    \end{equation}
	{\noindent} \rule[-10pt]{18cm}{0.05em}
\end{figure*}

\begin{figure*}
    \begin{equation}\label{achievable rate MRT}
      \begin{aligned}
        R_k^{\text{MRT}} =& \frac{\pi \csc(j_{k2}\pi)}{ \Gamma(j_{k1})  \Gamma(j_{k2}) \ln2 } \left ( \frac{\left ( \rho_d\chi_{k1} \right ) ^{j_{k2}}\Gamma(j_{k1}+j_{k2})}{\chi_{k2}^{j_{k2}}\Gamma(1+j_{k2})} H(j_{k2},j_{k1}+j_{k2},1+j_{k2},\frac{ \rho_d\chi_{k1} }{\chi_{k2}} )   \right .   \\
        &  \left .- \frac{\rho_d\chi_{k1} \Gamma(1+j_{k1})}{\chi_{k2}\Gamma(2-j_{k2})}H^P\left ( \left \{ 1,1,1+j_{k1} \right \},\left \{ 2,2-j_{k2} \right \},\frac{\rho_d\chi_{k1}  }{\chi_{k2}}   \right )   \right ).
      \end{aligned}
    \end{equation}
    \begin{equation}\label{achievable rate FZF}
      \begin{aligned}
        R_k^{\text{FZF}} = & \frac{\left ( -1\right )^{-j_{k2}}\pi \csc(j_{k2}\pi)}{\Gamma(j_{k2})\ln2}  \bar{\gamma}\left ( -\frac{\text{DS}_k^{\text{FZF}}}{\chi_{k2}} ,j_{k2} \right ) +\frac{\text{DS}_k^{\text{FZF}}\Gamma(-1 + j_{k2}) }{\chi_{k2}\Gamma(j_{k2})\ln2} H^P\left (  \left \{ 1,1 \right \} ,\left \{ 2,2-j_{k2} \right \} , \frac{\text{DS}_k^{\text{FZF}}}{\chi_{k2}}\right ) .
      \end{aligned}
    \end{equation}
	{\noindent} \rule[-10pt]{18cm}{0.05em}
\end{figure*}

\section{Simulation Results}

In this section, we validate our derived results by conducting Monte Carlo simulations across various scenarios and subsequently perform corresponding performance analysis.

\subsection{Simulation Setting}
In our simulations, we consider a system of randomly distributed APs and users within a rectangular area measuring $1km \times 1km$. The large-scale fading coefficient, which accounts for both path loss and shadowing effects, is denoted as $\beta_{mn} = PL_{mn} + z_{mn}$, where $PL_{mn}$ represents the path loss component, while $z_{mn} \sim \mathcal{CN}(0,\delta_{sh}^2)$ represents the shadowing component following a complex Gaussian distribution with a mean of zero and a variance of $\delta_{sh}^2$. To characterize the path loss, we employ a three-slope model proposed in \cite{NgoCell2017} and adopt the same parameter settings as described in that paper.
In addition, we set the pilot power to 20 dBm and the downlink transmit power to 23 dBm.
The system bandwidth is 2 MHz and the power of Gaussian noise is -174 dB/Hz.
The pilots are sequentially assigned to users, i.e., $i_k = \text{rem}\left ( k,l_p \right )$, where $\text{rem}\left ( ,\right )$ represents the remainder operation.
We employ the heuristic power allocation scheme introduced in \cite{InterdonatoLocal2020}, and the power allocation coefficients are computed using the following equation:
\begin{equation}\label{power allocation}
  \eta_{mk} = \frac{c_{mk}}{ {\textstyle \sum_{k \in \mathcal{K} } c_{mk}} }, \forall m,\forall k.
\end{equation}
For the specific configuration details regarding the number of APs, users, the number of antennas per AP, and the pilot length in the system, we will provide them during the corresponding experiments.

\subsection{CDF of SINR}

\begin{figure}[ht]
	\centering
    \setlength{\abovecaptionskip}{-0.1cm}   
	\includegraphics[scale=0.35]{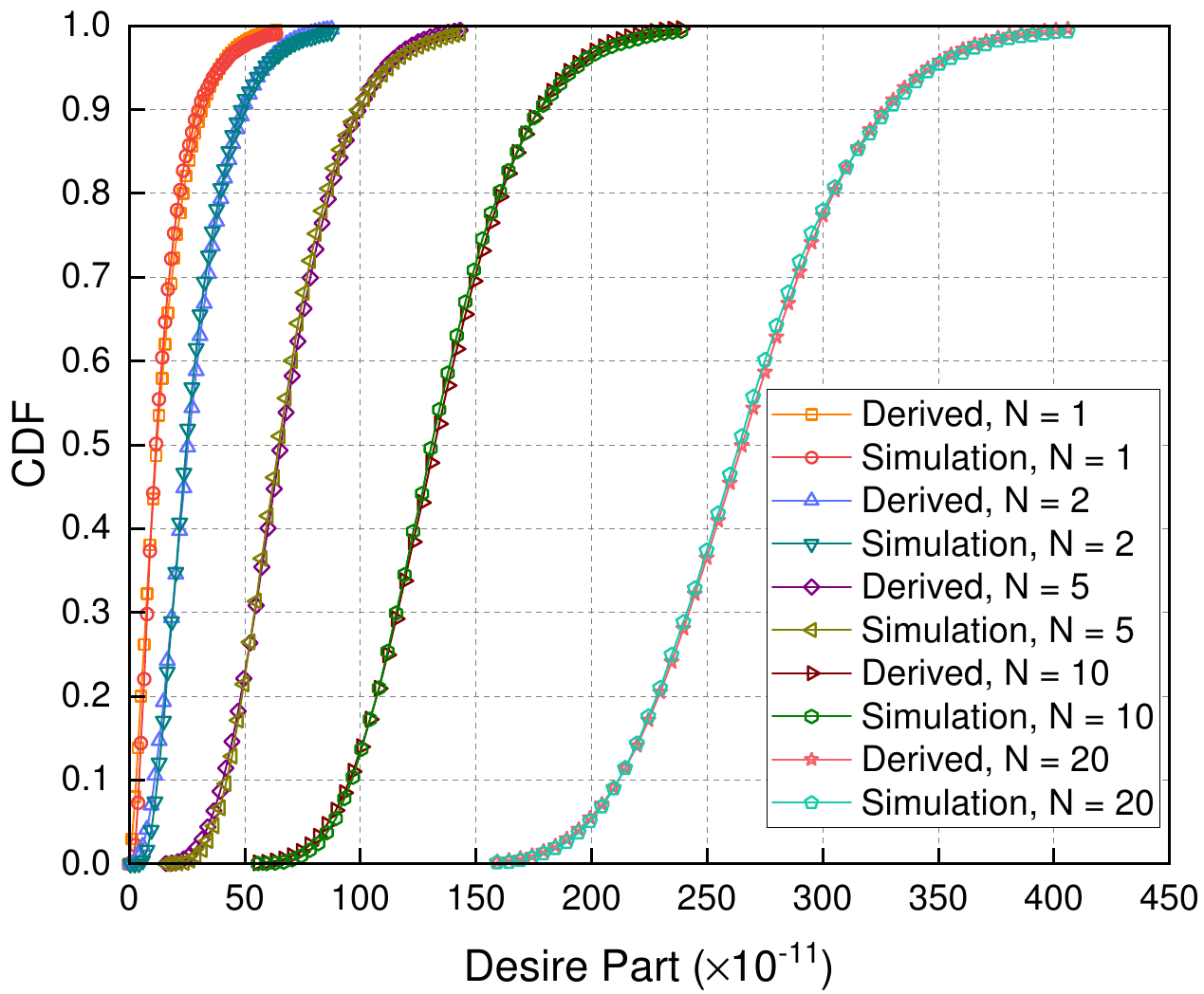}
	\caption{CDF of DS under MRT precoding with different number of antennas each AP. System parameters: $M$ = 120, $K$ = 20, $l_p$ = 10.}
	\label{MRTDSgraph}
\end{figure}

\begin{figure}[ht]
	\centering
    \setlength{\abovecaptionskip}{-0.1cm}   
	\includegraphics[scale=0.35]{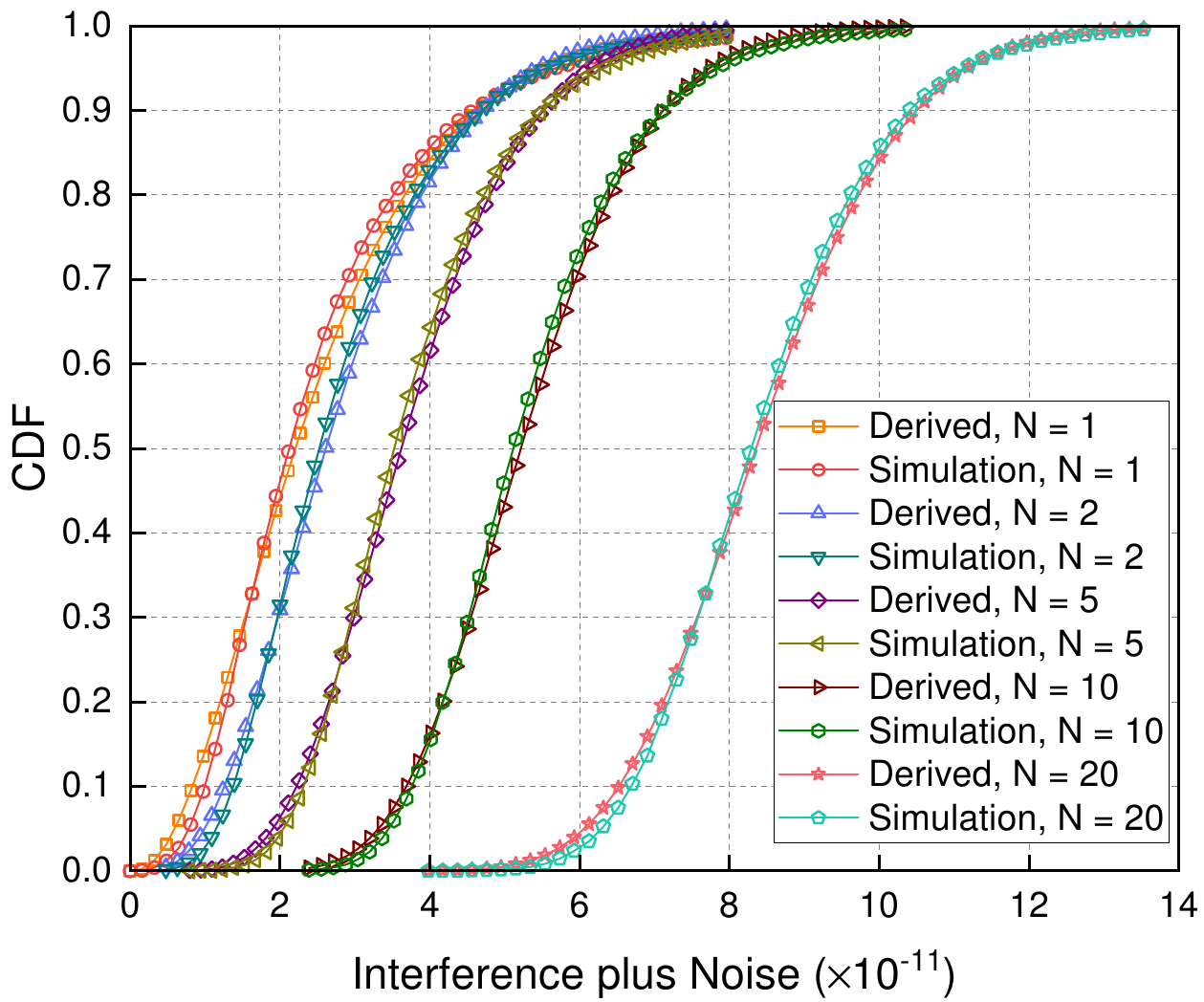}
    \caption{CDF of IN under MRT precoding with different number of antennas each AP. System parameters: $M$ = 120, $K$ = 20, $l_p$ = 10.}
	\label{MRTINgraph}
\end{figure}

Once MRT precoding is implemented in CF mMIMO systems, the CDF of DS and IN, concerning the variation in the number of antennas at the AP, are illustrated in Fig. \ref{MRTDSgraph} and Fig. \ref{MRTINgraph}, respectively. As anticipated, both DS and IN exhibit a corresponding increase with the augmentation of antenna quantity. This phenomenon can be attributed to the fact that a higher number of antennas enables the AP to transmit a greater number of signals, consequently leading to an enhancement in signal strength.
Furthermore, the results derived from our theoretical analysis align closely with the outcomes obtained through Monte Carlo simulations. This congruence serves as evidence supporting the accuracy of our conclusions.

\begin{figure}[ht]
	\centering
    \setlength{\abovecaptionskip}{-0.1cm}   
	\includegraphics[scale=0.35]{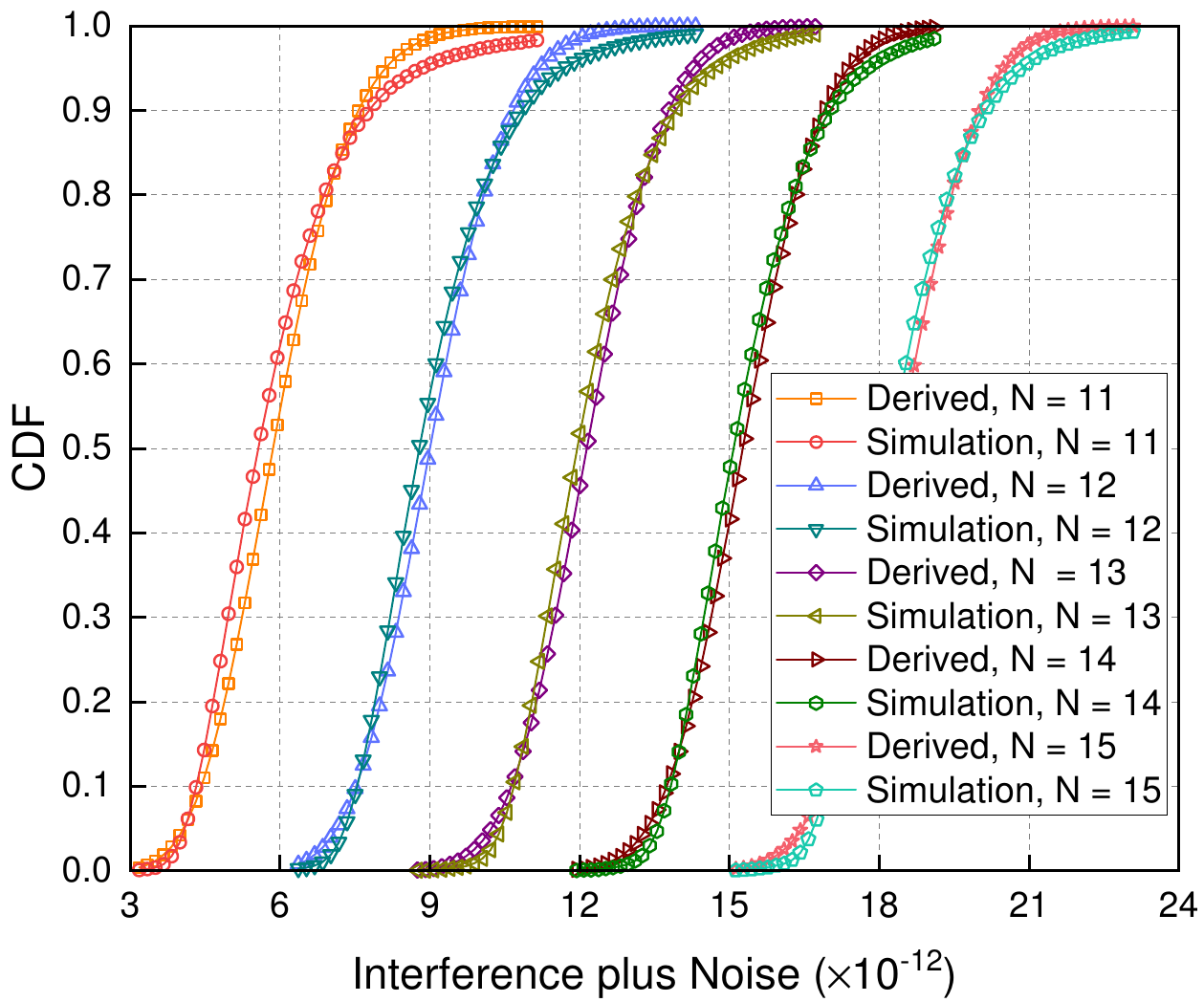}
	\caption{CDF of IN under FZF precoding with different number of antennas each AP. System parameters: $M$ = 120, $K$ = 20, $l_p$ = 10.}
	\label{FZFINgraph}
\end{figure}

In Fig. \ref{FZFINgraph}, we investigate the impact of the variation in the number of antennas at each AP on the CDF of IN when FZF precoding is deployed. As the number of antennas increases at each AP, interference becomes more severe, resulting in a proportional increase of IN. Moreover, the graph illustrates a close alignment between the distribution derived from our theoretical analysis and the distribution obtained through Monte Carlo simulations, providing strong evidence that our conclusions are accurate.

\begin{figure}[ht]
	\centering
    \setlength{\abovecaptionskip}{-0.1cm}   
	\includegraphics[scale=0.35]{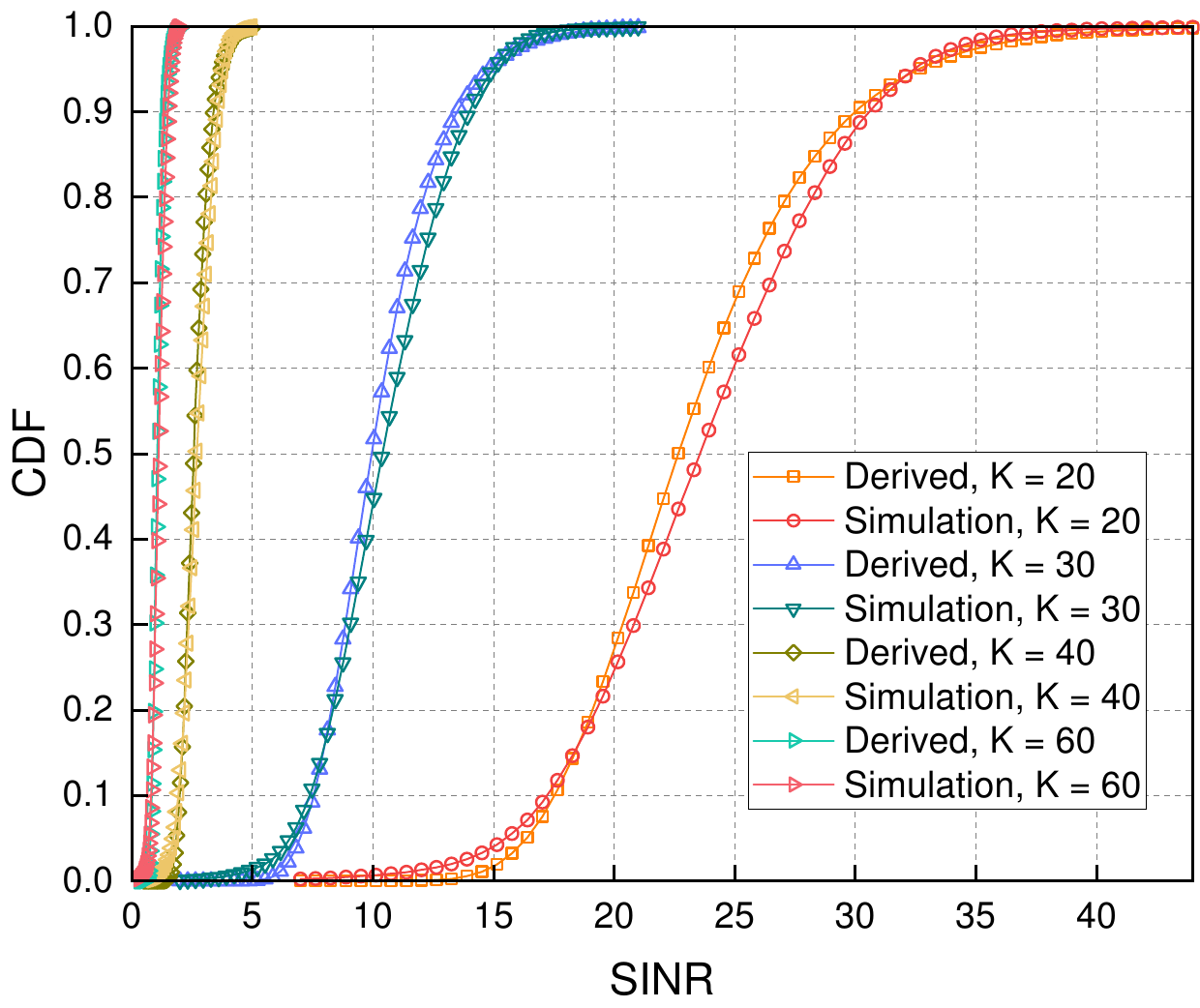}
    \caption{CDF of SINR under FZF precoding with different number of users. System parameters: $M$ = 120, $N$ = 11, $l_p$ = 10.}
	\label{CDFKFZF}
\end{figure}

\begin{figure}[ht]
	\centering
    \setlength{\abovecaptionskip}{-0.1cm}   
	\includegraphics[scale=0.35]{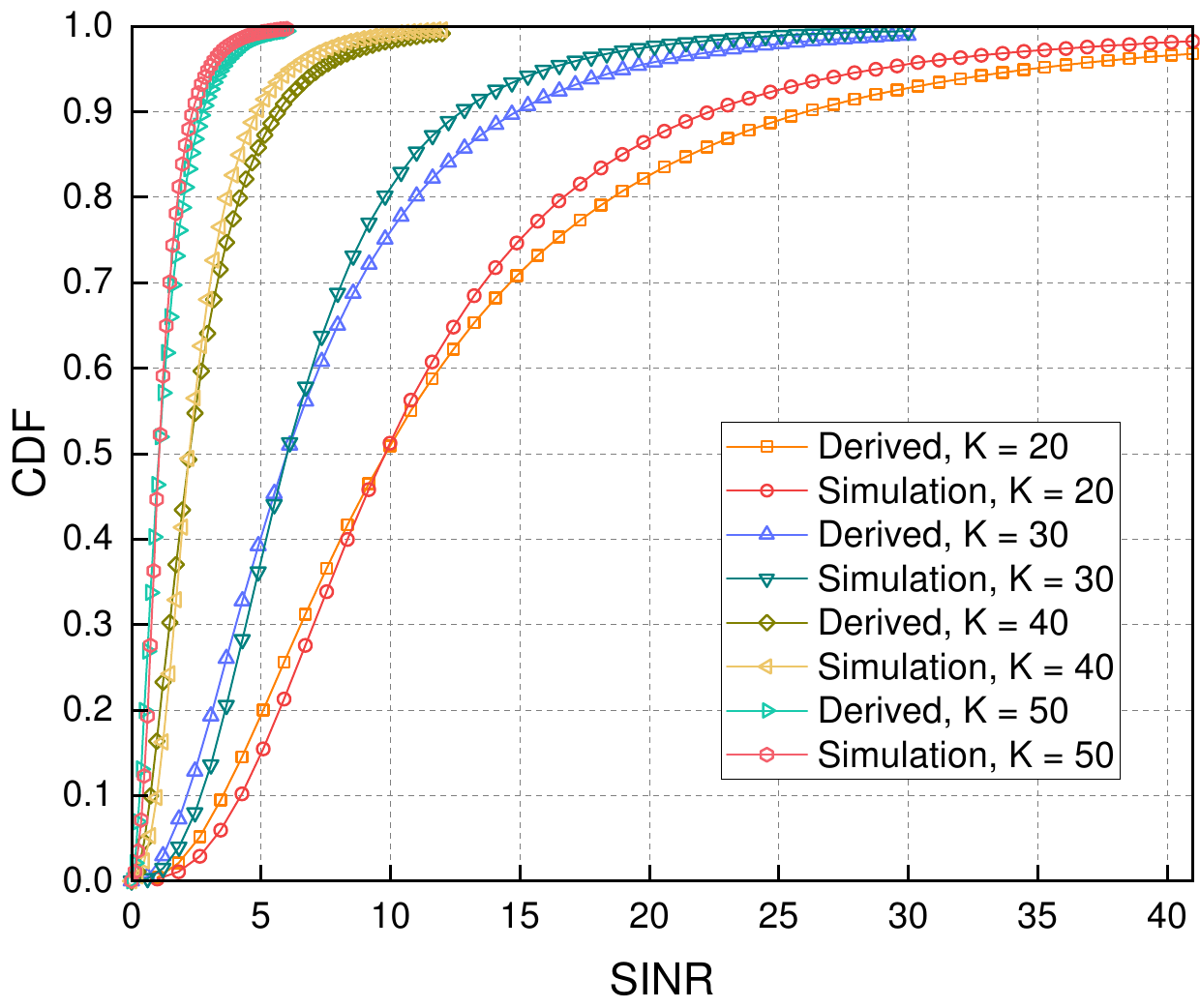}
    \caption{CDF of SINR under MRT precoding with different number of users. System parameters: $M$ = 120, $N$ = 2, $l_p$ = 10.}
	\label{CDFKMRT}
\end{figure}

\begin{figure}[ht]
	\centering
    \setlength{\abovecaptionskip}{-0.1cm}   
	\includegraphics[scale=0.35]{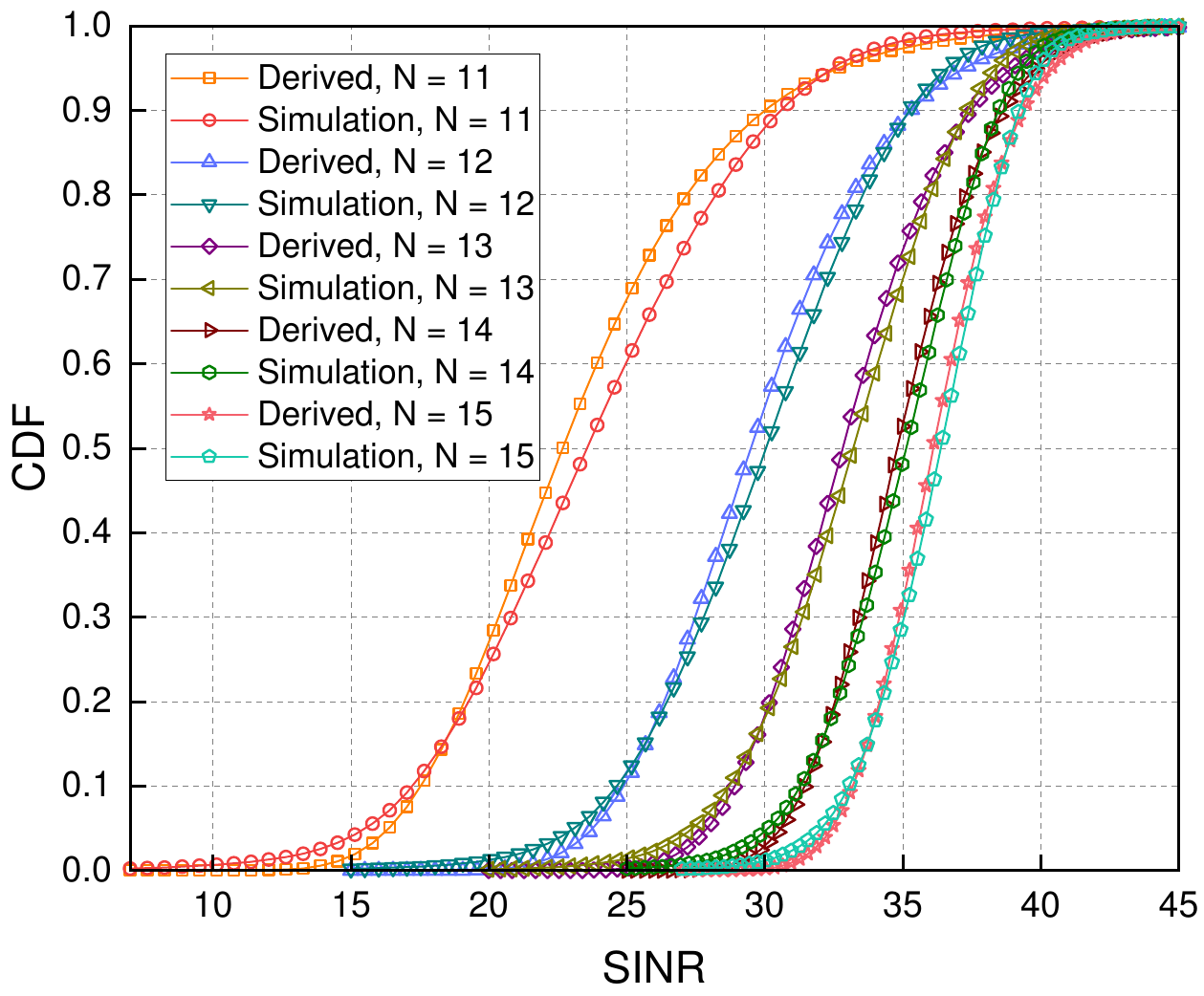}
    \caption{CDF of SINR under FZF precoding with different number of antennas each AP. System parameters: $M$ = 120, $K$ = 20, $l_p$ = 10.}
	\label{CDFNFZF}
\end{figure}

\begin{figure}[ht]
	\centering
    \setlength{\abovecaptionskip}{-0.1cm}   
	\includegraphics[scale=0.35]{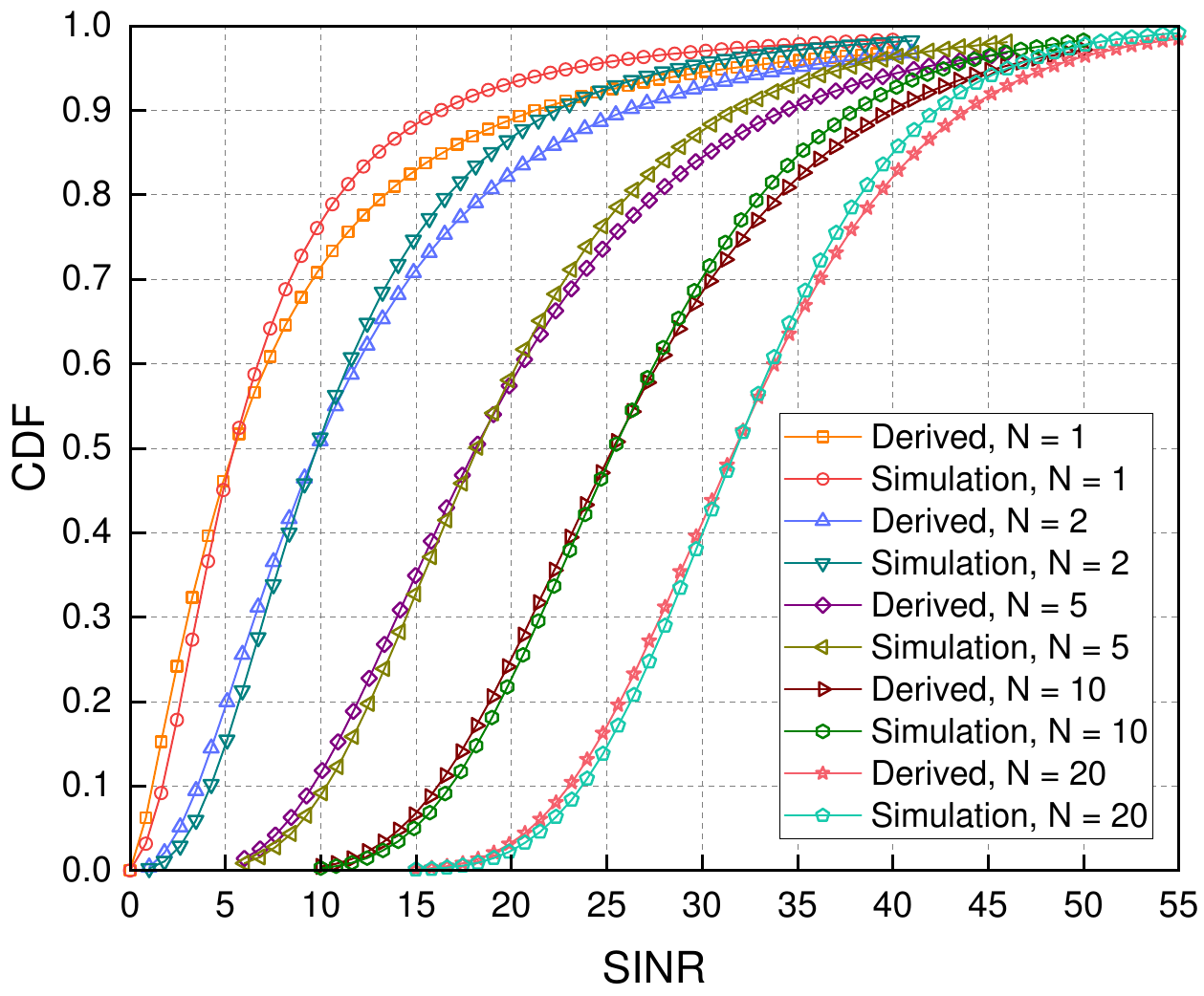}
    \caption{CDF of SINR under MRT precoding with different number of antennas each AP. System parameters: $M$ = 120, $K$ = 20, $l_p$ = 10.}
	\label{CDFNMRT}
\end{figure}

We compare the CDF of the SINR for varying numbers of users under FZF precoding, as illustrated in Fig. \ref{CDFKFZF}. It is evident that as the number of users increases, the derived results closely converge with the Monte Carlo simulation results, eventually reaching a consistent agreement. This convergence is attributed to our approximation of the IN based on the CLT. As the number of users grows, this approximation becomes increasingly accurate and approaches the true results. Furthermore, with an increasing number of users, the SINR correspondingly decreases. Despite the unchanged length of the pilot sequence, more users share the same pilot, resulting in significant pilot contamination and subsequently causing severe interference, ultimately leading to a lower SINR.

In Fig. \ref{CDFKMRT}, we present the CDF of the SINR under the deployment of MRT precoding, showcasing its variation with respect to the number of users. As the number of users increases, the derived distribution of SINR exhibits a growing consistency with the results obtained from Monte Carlo simulations. This increasing alignment can be attributed to the heightened applicability of the CLT as the number of users expands. Furthermore, similar to the observation in the case of FZF precoding, the SINR decreases with an increasing number of users. The reason for this degradation in SINR is the same. Additionally, under MRT precoding, users who do not employ the same pilot sequence also introduce interference with each other. Consequently, employing FZF precoding in CF mMIMO systems generally yields higher SINR, indicating superior performance compared to MRT precoding.

We turn our attention to comparing the impact of varying the number of antennas on each AP under FZF precoding on the CDF of the SINR in CF mMIMO systems, as depicted in Fig. \ref{CDFNFZF}. It is evident that as the number of antennas increases, the derived results align more closely with the results obtained from system analysis. This convergence can be attributed to the increased number of constituent signals in the DS and IN due to the growing number of antennas, thereby enhancing the accuracy of the approximation based on the CLT. Furthermore, the increase in the number of antennas leads to an increase in SINR. From (\ref{expression of Uk1 in FZF}), it can be observed that DS increases with the growing number of antennas. While, according to (\ref{expression of Uk2 in FZF}), the $U_{kk_1}^2$ component of IN also increases with the number of users, (\ref{second and fourth of psi downlink FZF}) indicates that $U_{kk_1}^3$ remains unchanged with an increasing number of antennas. Hence, an increase in the number of antennas results in an improvement in SINR.

Fig. \ref{CDFNMRT} illustrates the distribution of SINR under MRT precoding for different numbers of antennas. Similar to the scenario with FZF precoding, it is observed that as the number of antennas increases, the derived CDF of SINR closely aligns with the results obtained from Monte Carlo simulations. Furthermore, an increase in the number of antennas leads to an improvement in SINR. As MRT precoding only mitigates the influence of noise, while FZF precoding suppresses the effects of interference, FZF precoding generally achieves better performance.
\subsection{Achievable Rate and Outage Probability Analysis}

\begin{figure}[ht]
	\centering
    \setlength{\abovecaptionskip}{-0.1cm}   
	\includegraphics[scale=0.35]{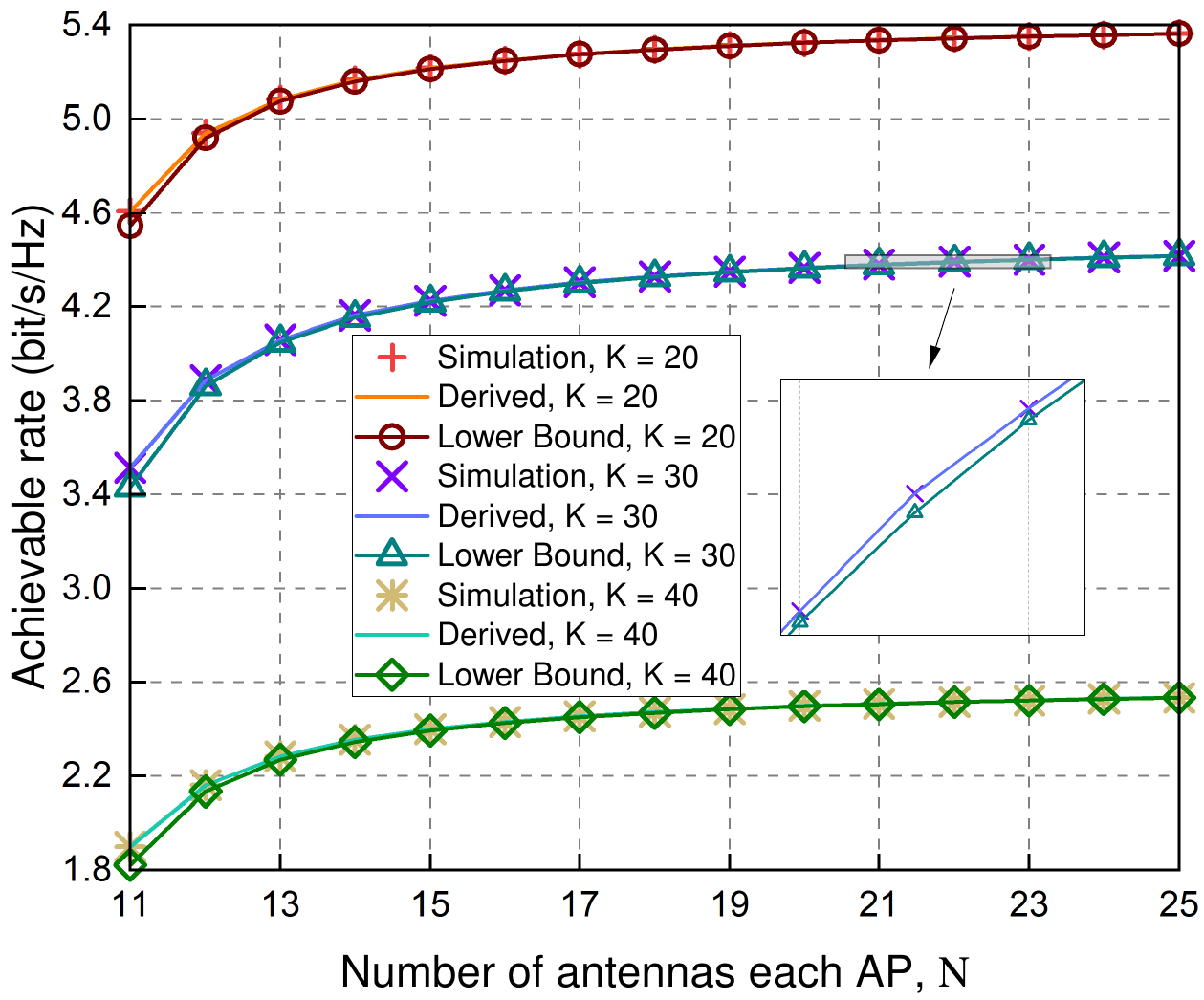}
    \caption{Achievable rate under FZF precoding with different number of antennas each AP. System parameters: $M$ = 120, $l_p$ = 10.}
	\label{RateFZF}
\end{figure}

\begin{figure}[ht]
	\centering
    \setlength{\abovecaptionskip}{-0.1cm}   
	\includegraphics[scale=0.35]{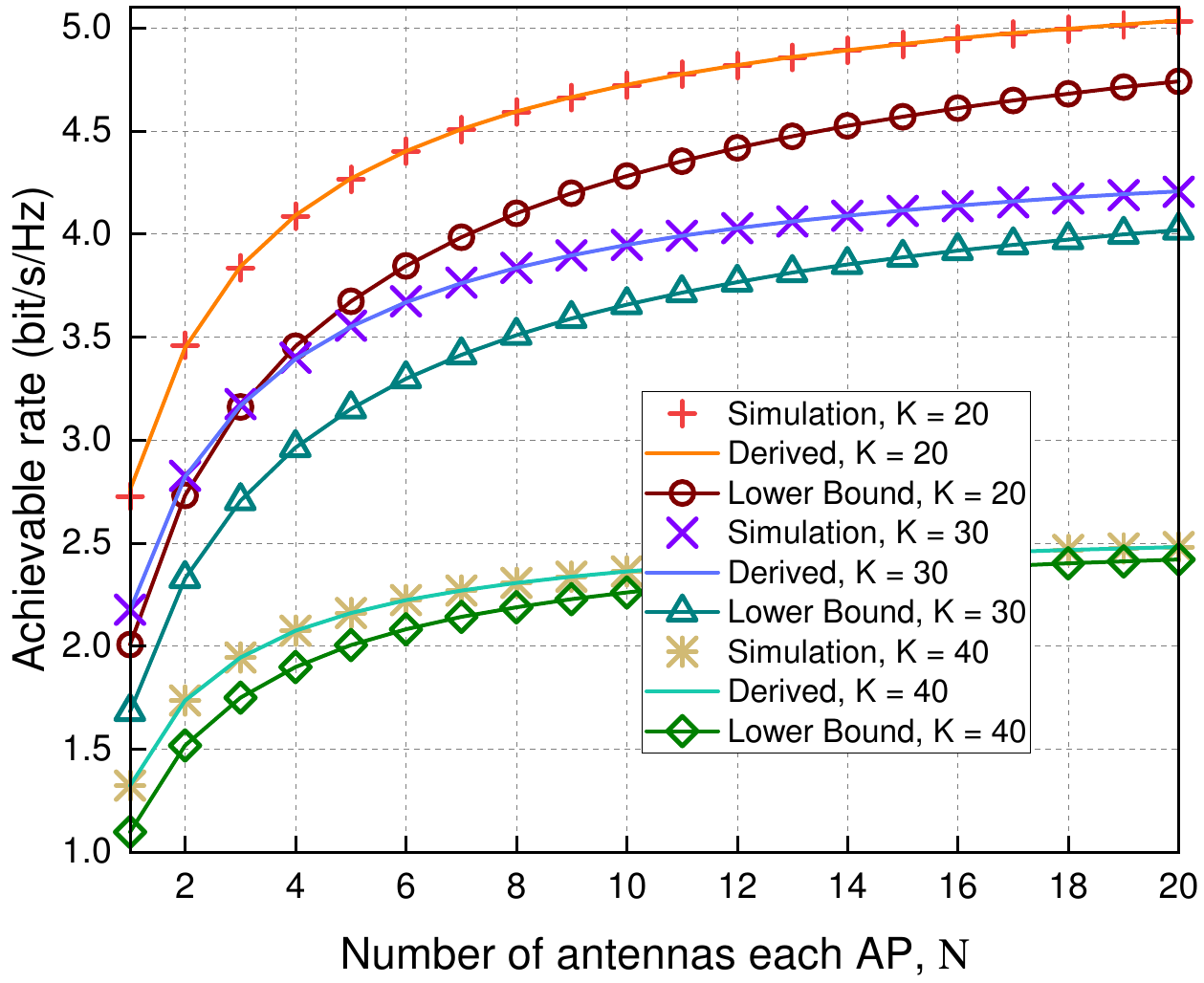}
    \caption{Achievable rate under MRT precoding with different number of antennas each AP. System parameters: $M$ = 120, $l_p$ = 10.}
	\label{RateMRT}
\end{figure}

We have investigated the impact of varying the number of antennas on each AP and the number of users on achievable rates under both FZF precoding and MRT precoding, as depicted in Fig. \ref{RateFZF} and Fig. \ref{RateMRT}, respectively. It can be observed that the results derived from our analysis align with the results obtained from Monte Carlo simulations. As the number of users decreases and the number of antennas increases, the achievable rates also increase. This is because the reduction in the number of users and the increase in the number of antennas result in higher SINR, thereby leading to an increase in achievable rates.
Besides, it can be observed that our derived results are higher than the lower bound with different scenarios. Under MRT precoding, our derived results are significantly higher than the lower bound. However, under FZF precoding, our derived results are only slightly higher than the lower bound.
This is because, with FZF precoding, both desired signals and interference can be directly computed, and the uncertainty in SINR mainly comes from channel estimation errors. Therefore, compared to MRT precoding, FZF precoding yields a more stable SINR, resulting in actual achievable rates that are closer to the lower bound.

\begin{figure}[ht]
	\centering
    \setlength{\abovecaptionskip}{-0.1cm}   
	\includegraphics[scale=0.35]{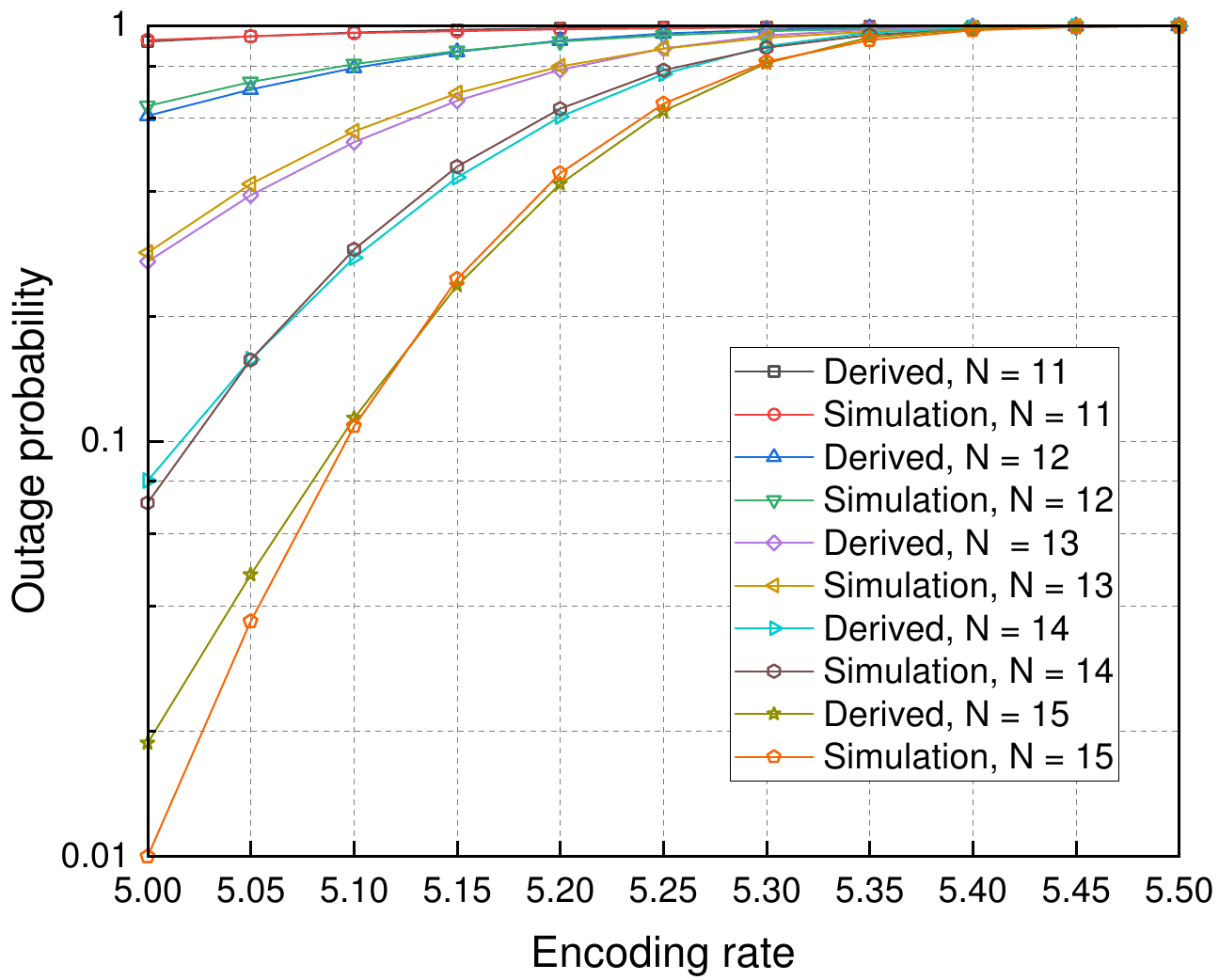}
    \caption{Outage probability under FZF precoding with different number of antennas each AP. System parameters: $M$ = 120, $l_p$ = 10.}
	\label{OutageFZF}
\end{figure}

\begin{figure}[ht]
	\centering
    \setlength{\abovecaptionskip}{-0.1cm}   
	\includegraphics[scale=0.35]{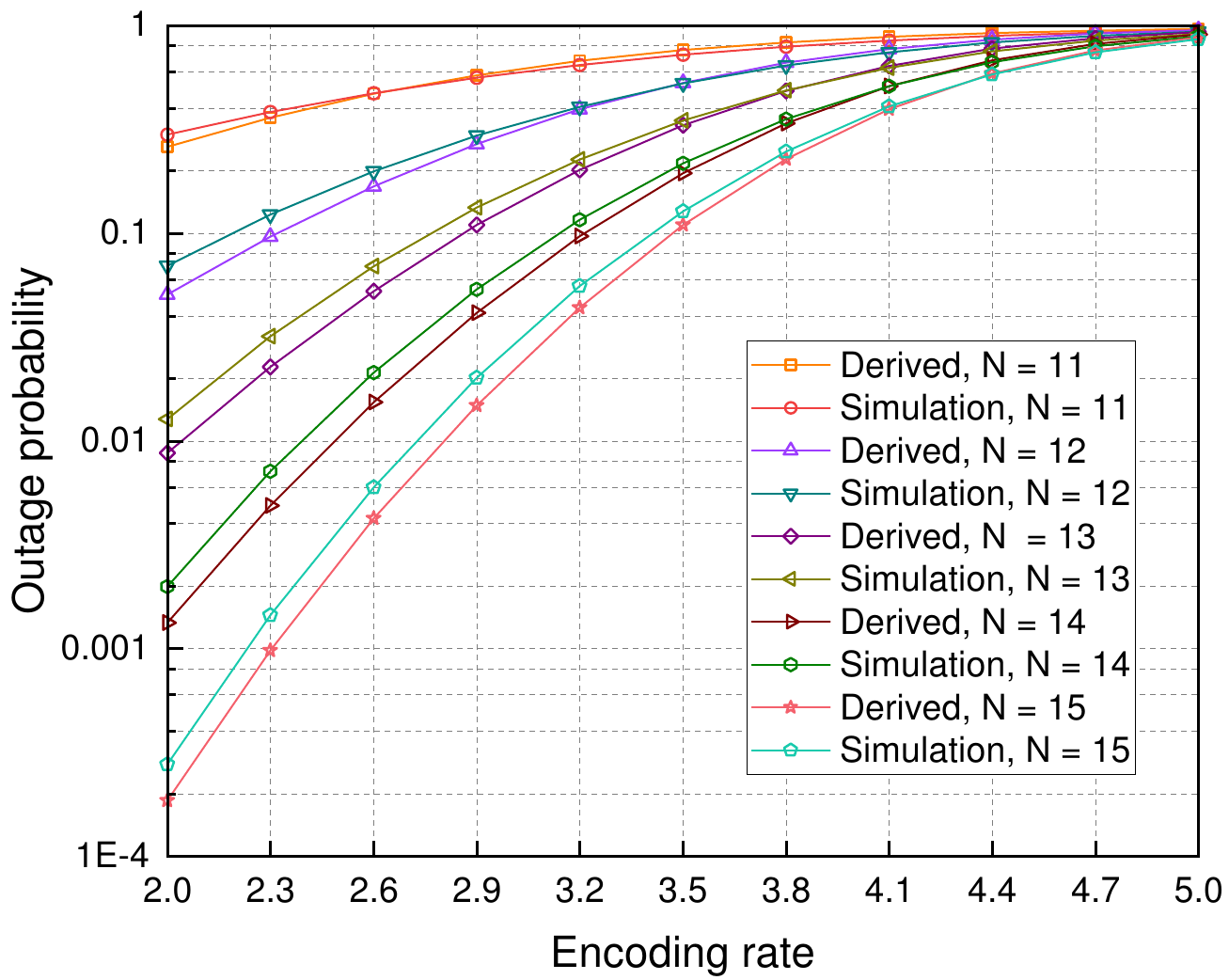}
    \caption{Outage probability under MRT precoding with different number of antennas each AP. System parameters: $M$ = 120, $l_p$ = 10.}
    \label{OutageMRT}
\end{figure}

We conducted separate investigations on the impact of different coding rates and the number of antennas on each AP for outage probability under FZF precoding and MRT precoding, as depicted in Fig. \ref{OutageFZF} and Fig. \ref{OutageMRT}, respectively. As anticipated, the outage probability increases continuously with higher coding rates and a reduced number of antennas on the APs. Moreover, it is noteworthy that our derived results closely align with the results obtained from Monte Carlo simulations across various scenarios.

\section{Conclusion}
In this paper, we provided the PDF and CDF of the SINR in CF mMIMO systems considering both MRT and FZF precoding schemes.
Moreover, we have performed a comprehensive performance analysis based on these results.
Specifically, we modeled a CF mMIMO system with pilot contamination and derived the expression for the SINR. Then, by considering the independence of the signals, we divided the SINR into two components, i.e., DS and IN.
Under MRT precoding, we derived the distributions of DS and IN using CLT and random matrix theory, enabling the analysis of the distribution of the SINR. Under FZF precoding, we directly computed the DS and derived the distribution of IN, which allowed us to obtain the PDF and CDF of the SINR.
Based on the aforementioned analysis, we derived expressions for the achievable rate and the outage probability in CF mMIMO systems under MRT and FZF precoding, respectively. Finally, simulation results demonstrated the accuracy of our derivations.

Currently, research on the statistical characteristics of SINR in the CF mMIMO system is still in its preliminary stage, and further investigation is required.
Firstly, when deploying other precoding schemes, such as local ZF or MMSE precoding, in CF mMIMO systems, the corresponding SINR distributions also need to be analyzed. Secondly, in a more realistic CF mMIMO system, factors such as backhaul constraints and hardware impairments must be considered to evaluate their impact on the SINR distribution and derive appropriate models. Lastly, it would be a meaningful endeavor to design new power allocation schemes, pilot allocation schemes, and other system design schemes based on the SE and EE derived from our current analysis of the statistical characteristics of SINR in CF mMIMO systems. These endeavors aim to better harness the potential of CF mMIMO systems.

\begin{appendices}

\section{Proof of Lemma \ref{lemma of U1 downlink MRT}}\label{appendix A}
  When MRT precoding is used for downlink data transmission, the precoding vector can be expressed as
  \begin{equation}\label{decoder vector of MRC}
    \mathbf{b}_{mk} = \frac{\bar{\mathbf{H}}_{m}\mathbf{e}_{i_k}}{\sqrt{\mathbb{E} \left \{ \left \| \bar{\mathbf{H}}_{m}\mathbf{e}_{i_k} \right \|^2 \right \}}   } = \frac{\hat{\mathbf{h}}_{mk} }{\sqrt{Nc_{mk}} }.
  \end{equation}

  Estimated channel $\hat{\mathbf{h}}_{mk}$ follows the distribution of $\mathcal{CN}(0,c_{mk}\mathbf{I}_N)$, then $\xi_{mk} =\sqrt{\frac{\eta_{mk}}{Nc_{mk} }}\hat{\mathbf{h}}_{mk}^H\hat{\mathbf{h}}_{mk}$ follows Gamma distribution with shape parameter $N$ and scale parameter $\sqrt{\frac{\eta_{mk}c_{mk}}{N}}$, i.e., $\xi_{mk} \sim Gamma(N,\sqrt{\frac{\eta_{mk}c_{mk}}{N} } )$.

  Then the first moment to fourth moment of $\xi_{mk}$ can be expressed as follows:
  \begin{subequations}\label{first to second moment of xi downlink MRT}
    \begin{align}
      &\mathbb{E}\left \{ \xi_{mk} \right \} = \sqrt{N\eta_{mk}c_{mk}}, \label{first to second moment of xi downlink MRTA}\\
      &\mathbb{E}\left \{ \xi_{mk}^2  \right \} = (N+1) \eta_{mk}c_{mk},\label{first to second moment of xi downlink MRTB}\\
      &\mathbb{E}\left \{ \xi_{mk}^3  \right \} = \frac{(N+1)(N+2)}{\sqrt{N} } \eta_{mk}^\frac{3}{2}  c_{mk}^\frac{3}{2}, \label{first to second moment of xi downlink MRTC}\\
      &\mathbb{E}\left \{ \xi_{mk}^4 \right \} = \frac{(N+1)(N+2)(N+3)}{N}  \eta_{mk}^2c_{mk}^2.\label{first to second moment of xi downlink MRTD}
    \end{align}
  \end{subequations}
  The first moment of the $ U_{k}^1 = \left |\sum_{m\in \mathcal{M}}\xi_{mk}\right |^2$ can be expressed as follows:
  \begin{equation}\label{first order of U1 downlink MRT}
    \begin{aligned}
        u_{U_{k}^1} &= \mathbb{E}\left \{ U_{k}^1 \right \} \\
                    &= \mathbb{E}\left \{ \left |\sum_{m\in \mathcal{M}}\xi_{mk}\right |^2 \right \} \overset{(a)}{=}  \mathbb{E}\left \{ \left (\sum_{m\in \mathcal{M}}\xi_{mk}\right )^2 \right \}\\
                    &\overset{(b)}{=}  \sum_{m\in \mathcal{M}} \mathbb{E}\left \{ \xi_{mk}^2  \right \}  + \sum_{m\in \mathcal{M}}\sum_{m_1 \ne m}\mathbb{E}\left \{ \xi_{mk} \right \}\mathbb{E}\left \{ \xi_{m_1k} \right \},
    \end{aligned}
  \end{equation}
  where $(a)$ follows the fact that $\xi_{mk}$ is real variable, and $(b)$ is obtained based on the independence between $\xi_{mk}$ and $\xi_{m'k}$ when $m \ne m'$.
  By inserting (\ref{first to second moment of xi downlink MRTA}) and (\ref{first to second moment of xi downlink MRTB}) into (\ref{first order of U1 downlink MRT}), we can obtain the first moment of the $U_k^1$.

  Similarly, the second moment of $U_k^1$ is given in (\ref{second order of U1 downlink MRT}).
  The second moment of $U_k^1$ is obtained by using (\ref{first to second moment of xi downlink MRTA})-(\ref{first to second moment of xi downlink MRTD}) and (\ref{second order of U1 downlink MRT}).
  \begin{figure*}
      \begin{equation}\label{second order of U1 downlink MRT}
        \begin{aligned}
           u_{U_k^1}^{(2)} &= \mathbb{E}\left \{  \left ( U_{k}^1  \right ) ^2 \right \} = \mathbb{E}\left \{ \left |\sum_{m\in \mathcal{M}}\xi_{mk}\right |^4 \right \}= \mathbb{E}\left \{ \left (\sum_{m\in \mathcal{M}}\xi_{mk}\right )^4 \right \} = \mathbb{E}\left \{ \left (\sum_{m\in \mathcal{M}}\xi_{mk}^2+\sum_{m\in \mathcal{M}}\sum_{m1 \ne m}\xi_{mk}\xi_{m1k}  \right )^2 \right \} \\
                       &=\sum_{m\in \mathcal{M}}\mathbb{E} \left \{\xi_{mk}^4 \right \} + \sum_{m\in \mathcal{M}}\sum_{ m_1 \ne m}\sum_{ m_2 \ne m,m_1} 6\mathbb{E} \left \{  \xi_{mk} ^2\right \} \mathbb{E} \left \{ \xi_{m_1k} \right \} \mathbb{E} \left \{ \xi_{m_2k}  \right \}  + \sum_{m\in \mathcal{M}}\sum_{m_1 \ne m}4\mathbb{E} \left \{ \xi_{mk}^3\right \} \mathbb{E} \left \{   \xi_{m_1k} \right \}\\
                       &\quad +\sum_{m\in \mathcal{M}}\sum_{m_1 \ne m}3\mathbb{E} \left \{\xi_{mk}^2 \right \} \mathbb{E} \left \{  \xi_{m_1k}^2 \right \} + \sum_{ m\in \mathcal{M}}\sum_{ m_1 \ne m}\sum_{ m_2 \ne m,m_1}\sum_{m_3 \ne m,m_1,m_2}\mathbb{E} \left \{ \xi_{mk}\right \} \mathbb{E} \left \{ \xi_{m_1k}  \right \}  \mathbb{E} \left \{ \xi_{m_2k}  \right \}  \mathbb{E} \left \{ \xi_{m_3k}  \right \}.\\
        \end{aligned}
      \end{equation}
	  {\noindent} \rule[-10pt]{18cm}{0.05em}
  \end{figure*}

\section{Proof of Lemma \ref{lemma of IN downlink MRT}}\label{appendix B}
  According to (\ref{parallel channel}), users using the same pilot sequence have the parallel estimated channels. Thus the first and second moments of $U_{kk_1}^2$ can be obtained similarly to $U_{k}^1$ when user $k$ and user $k_1$ use the same pilot sequence. The first to fourth moment of $\xi_{mkk_1}$ can be calculated as follows:
  \begin{subequations}\label{first to fourth moment of xi2 same}
    \begin{align}
      &\mathbb{E}\left \{ \xi_{mkk_1} \right \} = \sqrt{N\eta_{mk_1}c_{mk}}, \label{first to fourth moment of xi2 sameA}\\
      &\mathbb{E}\left \{ \xi_{mkk_1}^2  \right \} = (N+1) \eta_{mk_1}c_{mk}, \label{first to fourth moment of xi2 sameB}\\
      &\mathbb{E}\left \{ \xi_{mkk_1}^3  \right \} = \frac{(N+1)(N+2)}{\sqrt{N} } \eta_{mk_1}^\frac{3}{2}  c_{mk}^\frac{3}{2}, \label{first to fourth moment of xi2 sameC}\\
      &\mathbb{E}\left \{ \xi_{mkk_1}^4 \right \} = \frac{(N+1)(N+2)(N+3)}{N} \eta_{mk_1}^2c_{mk}^2. \label{first to fourth moment of xi2 sameD}
    \end{align}
  \end{subequations}
  Then the first moment of $U_{kk_1}^2$ can be expressed as follows:
  \begin{equation}\label{first moment of Uk2 downlink MRT same}
    \begin{aligned}
      &\mathbb{E}\left \{ U_{kk_1}^2 \right \} =  \\
      &\sum_{m \in \mathcal{M} }\mathbb{E}\left \{ \xi_{mkk_1}^2 \right \}    + \sum_{m \in \mathcal{M} }\sum_{m_1 \ne m} \mathbb{E}\left \{ \xi_{mkk_1} \right \} \mathbb{E}\left \{ \xi_{m_1kk_1}  \right \}
    \end{aligned}
  \end{equation}
  The second moment of $U_{kk_1}^2$ is given in (\ref{second moment of Uk2 downlink MRT 2 same}).
  The first and second moments of $U_{kk_1}^2$ can be obtained by using (\ref{first to fourth moment of xi2 same}), (\ref{first moment of Uk2 downlink MRT same}) and (\ref{second moment of Uk2 downlink MRT 2 same}).
  \begin{figure*}
      \begin{equation}\label{second moment of Uk2 downlink MRT 2 same}
        \begin{aligned}
           \mathbb{E}\left \{ \left ( U_{kk_1}^2 \right )^2  \right \}  &=\sum_{m\in \mathcal{M}}\mathbb{E} \left \{\xi_{mkk_1}^4 \right \} + \sum_{m\in \mathcal{M}}\sum_{ m_1 \ne m}\sum_{ m_2 \ne m,m_1} 6\mathbb{E} \left \{  \xi_{mkk_1} ^2\right \} \mathbb{E} \left \{ \xi_{m_1kk_1} \right \} \mathbb{E} \left \{ \xi_{m_2kk_1}  \right \}  \\
                       &\quad  + \sum_{m\in \mathcal{M}}\sum_{m_1 \ne m}4\mathbb{E} \left \{ \xi_{mkk_1}^3\right \} \mathbb{E} \left \{   \xi_{m_1kk_1} \right \} +\sum_{m\in \mathcal{M}}\sum_{m_1 \ne m}3\mathbb{E} \left \{\xi_{mkk_1}^2 \right \} \mathbb{E} \left \{  \xi_{m_1kk_1}^2 \right \} \\
                       &\quad + \sum_{ m\in \mathcal{M}}\sum_{ m_1 \ne m}\sum_{ m_2 \ne m,m_1}\sum_{m_3 \ne m,m_1,m_2}\mathbb{E} \left \{ \xi_{mkk_1}\right \} \mathbb{E} \left \{ \xi_{m_1kk_1}  \right \}  \mathbb{E} \left \{ \xi_{m_2kk_1}  \right \}  \mathbb{E} \left \{ \xi_{m_3kk_1}  \right \}.\\
        \end{aligned}
      \end{equation}
	  {\noindent} \rule[-10pt]{18cm}{0.05em}
  \end{figure*}

  When user $k$ and user $k_1$ use different pilot sequences, i.e., $k_1 \notin \mathcal{P}_k $, the second and fourth moment of $\xi_{mkk_1}$ can be expressed as follows:
  \begin{subequations}\label{first to fourth moment of xi2 different}
    \begin{align}
      &\mathbb{E}\left \{ \xi_{mkk_1}^*\xi_{mkk_1}  \right \}  = \eta_{mk_1}c_{mk}, \label{first to fourth moment of xi2 differentA}\\
      &\mathbb{E}\left \{ \left ( \xi_{mkk_1}^*\xi_{mkk_1} \right )^2  \right \}  = \frac{2\left ( N+1 \right ) }{N}  \eta_{mk_1}^2c_{mk}^2, \label{first to fourth moment of xi2 differentB}
    \end{align}
  \end{subequations}
  The first moment of $U_{kk_1}^2$ when $k_1 \notin \mathcal{P}_k $ can be expressed as follows:
  \begin{equation}\label{first1 moment of Uk2 downlink MRT}
    \begin{aligned}
      &\mathbb{E}\left \{ U_{kk_1}^2 \right \} = \mathbb{E}\left \{ \left ( \sum_{m \in \mathcal{M} } \xi_{mkk_1}^* \right )\left ( \sum_{m \in \mathcal{M} } \xi_{mkk_1} \right )   \right \}\\
      &=  \sum_{m \in \mathcal{M} }\mathbb{E}\left \{ \xi_{mkk_1}^*\xi_{mkk_1} \right \} + \sum_{m \in \mathcal{M} }\sum_{m_1 \ne m} \mathbb{E}\left \{ \xi_{mkk_1}^*  \xi_{m_1kk_1}  \right \}\\
      &\overset{(a)}{=}   \sum_{m \in \mathcal{M} }\mathbb{E}\left \{ \xi_{mkk_1}^*\xi_{mkk_1} \right \}
    \end{aligned}
  \end{equation}
  where $(a)$ is obtained based on the fact that the exception of $\xi_{mkk_1}$ is zero when user $k$ and user $k_1$ use different pilot sequences and the independence between $\xi_{mkk_1}$ and $\xi_{m_1kk_1}$ when $m \ne m_1$.
  Then the second moment of $U_{kk_1}^2$ when user $k$ and user $k_1$ use different pilot sequences is given in (\ref{second moment of Uk2 downlink MRT downlink different}).
  \begin{figure*}
      \begin{equation}\label{second moment of Uk2 downlink MRT downlink different}
        \begin{aligned}
          \mathbb{E}\left \{ \left ( U_{kk_1}^2 \right )^2  \right \} &= \mathbb{E}\left \{ \left | \sum_{m \in \mathcal{M} } \xi_{mkk_1} \right |^4  \right \} =\mathbb{E}\left \{ \left ( \sum_{m \in \mathcal{M} } \xi_{mkk_1}^*\xi_{mkk_1} + \sum_{m \in \mathcal{M} }\sum_{m_1 \ne m} \xi_{mkk_1}^*\xi_{m_1kk_1} \right )^2  \right \} \\
          & = \sum_{m \in \mathcal{M}}\mathbb{E}\left \{ \left ( \xi_{mkk_1}^*\xi_{mkk_1} \right )^2 \right \} + 2\sum_{m \in \mathcal{M} }\sum_{m_1 \ne m}\mathbb{E}\left \{ \xi_{mkk_1}^*\xi_{mkk_1}\right \} \mathbb{E}\left \{ \xi_{m_1kk_1}^*\xi_{m_1kk_1} \right \}.
        \end{aligned}
      \end{equation}
	  {\noindent} \rule[-10pt]{18cm}{0.05em}
  \end{figure*}

  By using (\ref{first to fourth moment of xi2 different}), (\ref{first1 moment of Uk2 downlink MRT}) and (\ref{second moment of Uk2 downlink MRT downlink different}), we can obtain the first and second moment of $U_{kk_1}^2$ when $k_1 \notin \mathcal{P}_k$.

  We calculated the first and second moments of $U_{kk_1}^3$ using a similar way with $U_{kk_1}^2$. The second and fourth moment of $\psi_{mkk_1}$ can be calculated as follows:
  \begin{equation}\label{second and fourth of psi downlink MRT}
    \begin{aligned}
      &\mathbb{E}\left \{ \psi_{mkk_1}^*\psi_{mkk_1} \right \}  = \eta_{mk_1} \left ( \beta_{mk}-c_{mk} \right ),\\
      &\mathbb{E}\left \{ \left ( \psi_{mkk_1}^*\psi_{mkk_1} \right )^2  \right \}  = \frac{2\left ( N+1 \right )}{N} \eta_{mk_1}^2\left ( \beta_{mk}-c_{mk} \right )^2 ,\\
    \end{aligned}
  \end{equation}
  Then the first and second moment of $U_{kk_1}^3$ can be expressed as follow:
  \begin{equation}\label{first and second moment of Uk3}
    \begin{aligned}
      &\mathbb{E}\left \{ U_{kk_1}^3 \right \} = \sum_{m\in \mathcal{M}}\mathbb{E}\left \{ \psi_{mkk_1}^* \psi_{mkk_1}\right \},\\
      &\mathbb{E}\left \{ \left ( U_{kk_1}^3 \right )^2  \right \} = \sum_{m \in \mathcal{M}}\mathbb{E}\left \{ \left ( \psi_{mkk_1}^*\psi_{mkk_1} \right )^2 \right \} \\
      &+ 2\sum_{m \in \mathcal{M} }\sum_{m_1 \ne m}\mathbb{E}\left \{ \psi_{mkk_1}^*\psi_{mkk_1}\right \} \mathbb{E}\left \{ \psi_{m_1kk_1}^*\psi_{m_1kk_1} \right \}.
    \end{aligned}
  \end{equation}
  The first and second moment of $U_{kk_1}^3$ can be obtained by using (\ref{second and fourth of psi downlink MRT}) and (\ref{first and second moment of Uk3}).
  The first order of $\text{IN}_k^{\text{FZF}}$ can be expressed as follows:
  \begin{equation}\label{fisrst order of IN downlink MRT}
    u_{\text{IN}_{k}^{\text{FZF}}} = \rho_d\sum_{k_1\ne k} \mathbb{E}\left \{ U_{kk_1}^2 \right \} +  \rho_d\sum_{k_1\in \mathcal{K} }\mathbb{E}\left \{ U_{kk_1}^3 \right \}  + 1.
  \end{equation}
  The second moments of $\text{IN}_{k}^{\text{FZF}}$ can be expressed as

  \begin{figure*}
      \begin{equation}\label{second order of IN downlink MRT}
        \begin{aligned}
          &u_{\text{IN}_{k}^{\text{FZF}}}^{(2)} = \mathbb{E}\left \{ \left (\rho_d\sum_{k_1\ne k}U_{kk_1}^2+\rho_d\sum_{k_1 \in \mathcal{K} }U_{kk_1}^3+z_k\right )^2  \right \} \\
                       &= \rho_d^2\left ( \sum_{k_1\ne k}\mathbb{E}\left \{ \left ( U_{kk_1}^2 \right )^2  \right \} + \sum_{k_1\ne k}\sum_{k_2\ne k,k_1}\mathbb{E}\left \{ U_{kk_1}^2 \right \} \mathbb{E}\left \{ U_{kk_2}^2 \right \} + \sum_{k_1\in \mathcal{K} }\mathbb{E}\left \{ \left ( U_{kk_1}^3 \right )^2 \right \} + 2\right . \\
                       & \left .  + \sum_{k_1\in \mathcal{K} }\sum_{k_2\ne k_1}\mathbb{E}\left \{ U_{kk_1}^3 \right \} \mathbb{E}\left \{ U_{kk_2}^3 \right \}  +  2\left ( \sum_{k_1\ne k}\mathbb{E}\left \{ U_{kk_1}^2 \right \} \right ) \left ( \sum_{k_1\in \mathcal{K} }\mathbb{E}\left \{ U_{kk_1}^3 \right \}\right ) \right ) +  2\rho_d\left (\sum_{k_1\ne k}\mathbb{E}\left \{ U_{kk_1}^2 \right \}+ \sum_{k_1\in \mathcal{K} }\mathbb{E}\left \{ U_{kk_1}^3 \right \} \right ).
        \end{aligned}
      \end{equation}
	  {\noindent} \rule[-10pt]{18cm}{0.05em}
  \end{figure*}
  Then the first and second moments of $\text{IN}_{k}^{\text{FZF}}$ can be obtained.

\section{Proof of Theorem \ref{PDF of gamma when MRT downlink}}\label{appendix C}
  In the CF mMIMO system with MRT precoding, the distribution of $\text{DS}_k^{\text{MRT}}$ can be approximated as a Gamma distribution with shape parameter $j_{k1}$ and scale parameter $\rho_d\chi_{k1}$ for user $k$, and $\text{IN}_k^{\text{MRT}}$ can be seemed as Gamma distribution with shape parameter $j_{k1}$ and scale parameter $\chi_{k2}$.
  Since the $\rho_d U_{k1}$ and $\text{IN}_k^{\text{MRT}}$ are independent with each other. The CDF of $\gamma_k$ can be calculated as follows:
  \begin{equation}\label{proof of ration 1}
    \begin{aligned}
      P\left \{ \gamma_k \le x \right \} &= P \left \{ \frac{\text{DS}_k^{\text{MRT}}}{\text{IN}_k^{\text{MRT}}}\le x\right \} =  P \left \{ \text{DS}_k^{\text{MRT}}\le \text{IN}_k^{\text{MRT}}x\right \}\\
                                         &= \int_{0}^{\infty}F_{\text{DS}_k^{\text{MRT}}}(xx_1)f_{\text{IN}_k^{\text{MRT}}}(x_1)dx_1.
    \end{aligned}
  \end{equation}
  where $F_{\text{DS}_k^{\text{MRT}}}(\cdot )$ denotes the CDF of $\text{DS}_k^{\text{MRT}}$ and $f_{\text{IN}_k^{\text{MRT}}}(\cdot)$ represnets the PDF of $\text{IN}_k^{\text{MRT}}$.
  Then the PDF of $\gamma_k$ can be calculated based on (\ref{proof of ration 1}), which is given in (\ref{proof of ration 2})

  Based on the PDF of $\gamma_k$, the CDF of $\gamma_k$ can be calculated using $F_{\gamma_k}^{\text{MRT}}(x) = \int_{0}^{x} f_{\gamma_k}^{\text{MRT}}(x_1) dx_1 $ directly.

  \begin{figure*}
    \begin{equation}\label{proof of ration 2}
      \begin{aligned}
         f_{\gamma_k}^{\text{MRT}}(x) &= \frac{\mathrm{d} P\left \{ \gamma_k \le x \right \}}{\mathrm{d} x} = \int_{0}^{\infty}x_1f_{\text{DS}_k^{\text{MRT}} }(xx_1)f_{\text{IN}_k^{\text{MRT}}}(x_1)dx_1. \\
                      &=  \int_{0}^{\infty} x_1 \frac{1}{\Gamma(j_{k1})\left(\rho_d\chi_{k1} \right)^{j_{k1}}} \left ( xx_1 \right) ^{\left ( j_{k1}-1 \right ) } e^{-\frac{xx_1}{\chi_{k1}}} \frac{1}{\Gamma(j_{k2})\left(\chi_{k2} \right)^{j_{k2}}}\left ( x_1 \right ) ^{\left ( j_{k2}-1 \right ) }e^{-\frac{x_1}{\chi_{k2}}}dx_1 \\
                      &= \frac{\Gamma(j_{k1}+j_{k2})}{\Gamma(j_{k1})\Gamma(j_{k2})\left ( \rho_d\chi_{k1} \right ) ^{j_{k1}}\chi_{k2}^{j_{k2}}}x^{j_{k1}-1}(\frac{1}{\chi_{k2}}+\frac{x}{\rho_d\chi_{k1}}) ^{-j_{k1}-j_{k2}}.
     \end{aligned}
    \end{equation}
	{\noindent} \rule[-10pt]{18cm}{0.05em}
  \end{figure*}

\end{appendices}

\bibliographystyle{IEEEtran}
\bibliography{IEEEabrv,On_the_Distribution_of_SINR_for_Cell_Free_Massive_MIMO_Systems}

\end{document}